\newcommand{\set}[1]{\left\{ #1\right\}}
\newcommand{\sodass}{\,:\,}
\newcommand{\setGilt}[2]{\left\{ #1\sodass #2\right\}}
\newcommand{\realrange}[2]{\left[#1, #2\right]}
\newcommand{\unitrange}[2]{\realrange{0}{1}}
\newcommand{\llabel}[1]{\label{\labelprefix:#1}}
\newcommand{\labelprefix}{} 
\newcommand{\discussionsize}{\small}
\newenvironment{code}{\noindent
\begin{tabbing}%
\hspace{2em}\=\hspace{2em}\=\hspace{2em}\=\hspace{2em}\=\hspace{2em}\=%
\hspace{2em}\=\hspace{2em}\=\hspace{2em}\=\hspace{2em}\=\hspace{2em}\=%
\kill}{\end{tabbing}}
\newcommand{\labelcommand}{}
\newcommand{\captiontext}{}
\newsavebox{\codeparam}
\newcounter{lineNumber}
\newenvironment{disscodepos}[3]{%
\renewcommand{\labelcommand}{#2}%
\renewcommand{\captiontext}{#3}%
\sbox{\codeparam}{\parbox{\textwidth}{#3}}%
\begin{figure}[#1]\begin{center}\begin{code}\setcounter{lineNumber}{1}}{%
\end{code}\end{center}\caption{\llabel{\labelcommand}\captiontext}\end{figure}}
\newcommand{\If}       {{\bf if\ }}
\newcommand{\Is}       {:=}
\newdimen\endofsize\endofsize=0.5em
\def\endofbeweis{~\quad\hglue\hsize minus\hsize
                 \hbox{\vrule height \endofsize width
\endofsize}\par}
\newcommand{\ie}{i.e.}
\newcommand{\etal}{et~al.~}
\newcommand{\Id}[1]{\ensuremath{\text{{\sf #1}}}}
\newcommand{\E}{\mathrm{E}}
\newcommand{\revision}[1]{#1}
\newtheorem{theorem}{Theorem}
\begin{document}
\title{High-Quality Shared-Memory Graph Partitioning}

\author{\IEEEauthorblockN{Yaroslav Akhremtsev}
\IEEEauthorblockA{Karlsruhe Institute of Technology (KIT)\\
yaroslav.akhremtsev@kit.edu}
\and
\IEEEauthorblockN{Peter Sanders}
\IEEEauthorblockA{Karlsruhe Institute of Technology (KIT)\\
peter.sanders@kit.edu}
\and
\IEEEauthorblockN{Christian Schulz}
\IEEEauthorblockA{Faculty of Computer Science \\University of Vienna \\
	christian.schulz@univie.ac.at
	}
}

\maketitle
\thispagestyle{plain}
\pagestyle{plain}
\begin{abstract}
Partitioning graphs into blocks of roughly equal size such that few
edges run between blocks is a frequently needed operation in
processing graphs. Recently, size, variety, and structural complexity of these networks has
grown dramatically. Unfortunately, previous approaches to parallel graph partitioning have problems
in this context since they often show a negative trade-off between speed and quality. We present an approach to multi-level shared-memory parallel graph partitioning that guarantees balanced solutions, shows high speed-ups for a variety of large graphs and yields very good quality independently of the number of cores used.
\revision{For example, on 31 cores, our algorithm partitions our largest test instance 
into 16 blocks cutting \emph{less than half} the number of edges than our main
competitor when both algorithms are given the same amount of time.}
Important ingredients include parallel label propagation for both coarsening and improvement,
parallel initial partitioning, a simple yet effective approach to parallel localized local search, and fast locality preserving hash tables.

\end{abstract}

\begin{IEEEkeywords}
parallel graph partitioning, shared-memory parallelism, local search, label propagation
\end{IEEEkeywords}

\IEEEpeerreviewmaketitle

\section{Introduction}
Partitioning a graph into $k$ blocks of similar size such that few edges are cut
is a fundamental problem with many applications. For example, it often arises when processing a single graph on $k$ processors.


The graph partitioning problem is NP-hard 
and
  there is no approximation algorithm with a constant ratio factor
  for general graphs~\cite{bui1992finding}. Thus, to solve the graph partitioning problem
  in practice, one needs to use heuristics. A very common approach to partition a graph is the multi-level graph partitioning (MGP) approach. The main idea is to contract the graph in the \emph{coarsening} phase until it is small enough to be partitioned by
  more sophisticated but slower algorithms in the \emph{initial partitioning} phase. Afterwards, in the \emph{uncoarsening/local search} phase, the quality of the partition is improved on every level of the computed hierarchy
  using a local improvement algorithm.

  There is a need for shared-memory parallel graph partitioning algorithms that efficiently utilize all cores of a machine.
  This is due to the well-known fact that CPU technology increasingly provides more cores with relatively low clock rates in the
  last years since these are cheaper to produce and run.
  Moreover, shared-memory parallel algorithms implemented without message-passing libraries (e.g. MPI) usually give better
  speed-ups and running times than its MPI-based counterparts. 
  Shared-memory parallel graph partitioning algorithms can in turn also be used as a component of a distributed graph partitioner,
  which distributes parts of a graph to nodes of a compute cluster and then employs a shared-memory parallel
  graph partitioning algorithm to partition
  the corresponding part of the graph on a node level.

  \emph{Contribution:}
    We present a high-quality shared-memory parallel multi-level graph partitioning algorithm that parallelizes
    all of the three MGP phases -- coarsening, initial partitioning and refinement -- using \texttt{C++14} multi-threading.
    Our approach uses a parallel label propagation algorithm that is able
    to shrink large complex networks fast during coarsening.  Our
    parallelization of localized local search~\cite{kaffpa} is able to
    obtain high-quality~solutions and guarantee balanced partitions
    despite performing most of the work in mostly independent local searches
    of individual threads.  Using \emph{cache-aware hash tables} we
    limit memory consuption and improve locality.

  After presenting preliminaries and related work in Section~\ref{sec:prelim}, we explain details of the multi-level graph partitioning approach and
  the algorithms that we parallelize in Section~\ref{mgp}.
  Section~\ref{parallel_mgp} presents our approach to the parallelization of the multi-level graph partitioning phases.
  More precisely, we present a parallelization of label propagation with size-constraints~\cite{meyerhenke2014partitioning},
  as well as a  parallelization of $k$-way multi-try local search~\cite{kaffpa}.
  Section~\ref{impl_details} describes further optimizations.
  Extensive experiments are presented in Section~\ref{sec:experi}.
  Our approach scales comparatively better than other parallel partitioners
  and has considerably higher quality which does not degrade with increasing number of processors.

\section{Preliminaries}
\label{sec:prelim}
  \subsection{Basic concepts}
    %

\revision{Let $G=(V=\{0,\ldots, n-1\},E)$ be an undirected graph, where
$n = |V|$ and $m = |E|$.
We consider positive, real-valued edge and vertex weight functions $\omega$ and $c$ extending
them to sets, e.g., $\omega(M)\Is \sum_{x\in M}\omega(x)$. $N(v)\Is\setGilt{u}{\set{v,u}\in E}$
denotes the neighbors of $v$. The degree of a vertex $v$ is $d(v):=|N(v)|$.  $\Delta$ is the
maximum vertex degree. A vertex is a \emph{boundary vertex} if it is incident to a vertex in a different
block. We are looking for disjoint \emph{blocks} of vertices $V_1$,\ldots,$V_k$ that partition $V$; i.e.,
$V_1\cup\cdots\cup V_k=V$. The \emph{balancing constraint} demands that all blocks
have weight $c(V_i)\leq (1+\epsilon)\lceil\frac{c(V)}{k}\rceil=:L_{\max}$
for some imbalance parameter $\epsilon$.
We call a block $V_i$ \emph{overloaded} if its weight exceeds $L_{\max}$.
The objective is to minimize the total \emph{cut} $\omega(E\cap\bigcup_{i<j}V_i\times V_j)$.}
We define the gain of a vertex \revision{as the maximum decrease in cut size
when moving it to a different block.}
We denote the number of processing elements (PEs) as $p$.

A clustering is also a partition of the vertices. However, $k$ is usually not given in advance and the balance constraint is removed.
A size-constrained clustering constrains the size of the blocks of a clustering \revision{by a given upper bound $U$.}

An abstract view of the partitioned graph is a \emph{quotient graph}, in which vertices represent blocks and edges are induced by
connectivity between blocks.
The \emph{weighted} version of the quotient graph has vertex weights that are set to the weight of the corresponding block and edge weights
that are equal to the weight of the edges that run between the respective blocks.

In general, our input graphs $G$ have unit edge weights and vertex weights.
However, even those will be translated into weighted problems in the course of the multi-level algorithm.
In order to avoid a tedious notation, $G$ will denote the current state of the graph before and after a (un)contraction in the multi-level scheme
throughout this paper.

\revision{Atomic concurrent updates of memory cells are possible using the compare and swap operation
CAS($x$, $y$, $z$). If $x = y$ then this operation assigns $x \leftarrow z$ and returns $\textit{True}$; 
otherwise it returns $\textit{False}$.

We analyze algorithms using the concept of total \emph{work} (the
time needed by one processor) and \emph{span}; i.e., the time
needed using an unlimited number of processors~\cite{Blelloch96}.}
    
  \subsection{Related Work}
    There has been a \emph{huge} amount of research on graph partitioning so that we refer the reader 
    to~\cite{schloegel2000gph,GPOverviewBook,Walshaw07,buluc2016recent} for most of the material.
    Here, we focus on issues closely related to our main contributions.
    All general-purpose methods that are able to obtain good partitions for large real-world graphs are based on the multi-level principle.
    Well-known software packages based on this approach include Jostle~\cite{Walshaw07}, KaHIP~\cite{kaffpa},
    Metis~\cite{karypis1998fast} and Scotch~\cite{ptscotch}.

    Probably the fastest available distributed memory parallel code is the parallel version of Metis,
    ParMetis~\cite{karypis1996parallel}.
    This parallelization has problems maintaining the balance of the blocks since at
    any particular time, it is difficult to say how many vertices are assigned to a
    particular block. In addition, ParMetis only uses very simple greedy local search algorithms that do not yield high-quality solutions.
    Mt-Metis by LaSalle and Karypis~\cite{lasalle2016parallel, lasalle2013multi}
    is a shared-memory parallel version of the \texttt{ParMetis} graph partitioning framework.
    LaSalle and Karypis use a hill-climbing technique during refinement. The local search method is a simplification of $k$-way multi-try local search~\cite{kaffpa} in order to make it fast. 
    The idea is to find a set of vertices (hill) whose move to another block is beneficial
    and then to move this set accordingly. However, it is possible that several PEs move the same vertex. To handle this, each vertex
    is assigned a PE, which can move it exclusively. Other PEs use a
    message queue to send a request to move~this~vertex.


    PT-Scotch~\cite{ptscotch}, the parallel version of Scotch, is based on
    recursive bipartitioning. This is more difficult to parallelize than direct
    $k$-partitioning since in the initial bipartition, there is less parallelism
    available.  The unused processor power is used by performing several independent
    attempts in parallel. The involved communication effort is reduced by considering only vertices
    close to the boundary of the current partitioning (band-refinement).
    KaPPa~\cite{kappa} is a parallel matching-based MGP algorithm which is also restricted to the case
    where the number of blocks equals the number of processors used.
    PDiBaP~\cite{Meyerhenke12shapeCross} is a multi-level diffusion-based algorithm that is targeted at small- to
    medium-scale parallelism with dozens of processors.

    The label propagation clustering algorithm was initially proposed by Raghavan \etal\cite{labelpropagationclustering}.
    A single round of simple label propagation can be interpreted as the randomized agglomerative clustering approach proposed by Catalyurek and
    Aykanat~\cite{catalyurek1999hypergraph}.
    Moreover, the label propagation algorithm has been used to partition networks by Uganer and Backstrom~\cite{UganderB13}. The authors do not use
    a multi-level scheme and rely on a given or random partition  which is improved by combining the unconstrained label propagation approach with
    linear programming. The approach does not yield high quality~partitions.

    Meyerhenke \etal\cite{meyerhenke2017parallel} propose ParHIP, to partition large complex networks on distributed memory parallel machines.
    The partition problem is addressed by parallelizing and adapting the label propagation technique for graph coarsening and refinement. The
    resulting system is more scalable and achieves higher quality than
    the state-of-the-art systems like ParMetis~or~PT-Scotch.


\section{Multi-level Graph Partitioning}\label{mgp}
  We now give an in-depth description of the three main phases of a multi-level graph partitioning algorithm:
  coarsening, initial partitioning and uncoarsening/local search. In particular, we give a description of the sequential algorithms that we parallelize in the following sections. 
\revision{Our starting point here is the fast social configuration of KaHIP. For the development of the parallel algorithm, we add $k$-way multi-try local search scheme that gives higher quality, and improve it to perform less work than the original sequential version.}
  The original sequential implementations of these algorithms are contained in the KaHIP~\cite{kaffpa} graph partitioning framework. A general principle is to randomize tie-breaking whenever possible. This diversifies the search and allows improved solutions by repeated tries.

  \subsection{Coarsening}
To create a new level of a graph hierarchy, the rationale here is to compute a clustering with clusters that are bounded in size and then to
\emph{contract} each cluster into a supervertex. This coarsening procedure is repeated recursively until the coarsest graph is
small enough.
%
Contracting the clustering works by replacing each cluster with a single vertex.
The weight of this new vertex (or supervertex) is set to the sum of the weight of all vertices in the original cluster.
There is an edge between two vertices $u$ and $v$ in the contracted graph if the
two corresponding clusters in the clustering are adjacent to each other in $G$;
\ie, if the cluster of $u$ and the cluster of $v$ are connected by at least one edge. The weight of an edge $(A, B)$
is set to the sum of the weight of edges that run between cluster~$A$ and cluster~$B$ of the clustering.
The hierarchy created in this recursive manner is then used by the partitioner. Due to the way the contraction is defined,
it is ensured that a partition of the coarse graph corresponds to a partition of the finer graph with the same
cut and balance.
We now describe the clustering algorithm that we parallelize.

    \subsubsection*{Clustering} \label{coarsening:clustering}
      We denote the set of all clusters as $C$ and the cluster ID of a vertex $v$ as $C[v]$. There are a variety of
      clustering algorithms. Some of them build clusters of size two (matching algorithms) and
      other build clusters with size less than a given upper bound.
      In our framework, we use the label propagation algorithm by Meyerhenke et al.~\cite{meyerhenke2014partitioning} that
      creates a clustering fulfilling a size-constraint.

      The size constrained label propagation algorithm works in iterations; \ie, the algorithm is repeated $\ell$ times,
      where $\ell$ is a tuning parameter. Initially, each vertex is in its own cluster ($C[v] = v$) and all vertices are
      put into a queue $Q$ in increasing order of their degrees.
      During each iteration, the algorithm iterates over all vertices in $Q$.
      A neighboring cluster $\mathcal{C}$ of a vertex $v$ is called \emph{eligible} if $\mathcal{C}$ will not
      become overloaded once $v$ is moved to $\mathcal{C}$. When a vertex $v$ is visited, it is \emph{moved} to
      the eligible cluster that has the strongest connection to~$v$; \ie, it is moved
      to the eligible cluster $\mathcal{C}$ that maximizes $\omega(\{(v, u) \mid u \in N(v) \cap \mathcal{C} \})$.
      If a vertex changes its
      cluster ID then all its neighbors are added to a queue $Q'$ for the next iteration.
      At the end of an iteration, $Q$ and $Q'$ are swapped, and the algorithm proceeds with the next iteration.
      The sequential running time of one iteration of the algorithm is $\mathcal{O}(m + n)$.

      \subsection{Initial Partitioning}
      We adopt the algorithm from KaHIP~\cite{kaffpa}: After
      coarsening, the coarsest level of the hierarchy is partitioned
      into $k$ blocks using a recursive bisection
      algorithm~\cite{kernighansome}.  More
      precisely, it is partitioned into two blocks and then the
      subgraphs induced by these two blocks are recursively
      partitioned into $\lceil \frac{k}{2} \rceil$~and~$\lfloor
      \frac{k}{2} \rfloor$ blocks each. Subsequently, this partition
      is improved using local search and flow techniques.  To get a
      better solution, the coarsest graph is partitioned into $k$
      blocks $I$ times and the best solution is returned.

\subsection{Uncoarsening/Local Search} \label{seq_ls}
    After initial partitioning, a local search algorithm is applied to
    improve the cut of the partition.  When local search has finished,
    the partition is transferred to the next finer graph in the
    hierarchy; \ie, a vertex in the finer graph is assigned the block
    of its coarse representative.  This process is then repeated for
    each level of the hierarchy.

    There are a variety of local search algorithms: size-constraint label propagation,
    Fiduccia-Mattheyses $k$-way local search~\cite{fiduccia1982lth},
    max-flow min-cut based local search~\cite{kaffpa},
    $k$-way multi-try local search~\cite{kaffpa} \ldots .
    Sequential versions of KaHIP use combinations of those. 
    Since $k$-way local search is P-complete~\cite{savage1991parallelism},
    our algorithm uses size-constraint label propagation in combination with $k$-way multi-try local
    search. More precisely, the size-constraint label propagation algorithm can be used as a fast local search algorithm if one starts from a partition of the graph instead of a clustering and uses the size-constraint of the partitioning problem.
    On the other hand, $k$-way multi-try local search is able to find high quality solutions. Overall, this combination allows us to achieve a parallelization with good solution quality and~good~parallelism.

    We now describe multi-try $k$-way local search (MLS).  In contrast
    to previous $k$-way local search methods MLS is not initialized
    with \emph{all} boundary vertices; that is, not all boundary vertices
    are eligible for movement at the beginning.  Instead, the method
    is repeatedly initialized with a \emph{single} boundary vertex.
    This enables more diversification and has a better chance of
    finding nontrivial improvements that begin with negative gain
    moves~\cite{kaffpa}.

    The algorithm is organized in a nested loop of global and local iterations.
    A global iteration works as follows.
    Instead of putting \emph{all} boundary vertices directly into a priority queue,
    boundary vertices under consideration are put into a todo list $T$.
    Initially, all vertices are unmarked.
    Afterwards, the algorithm repeatedly chooses and removes
    a random vertex $v \in T$. If the vertex is unmarked, it starts to perform $k$-way local search around $v$,
    marking every vertex that is moved during this search.
    More precisely, the algorithm inserts $v$ and $N(v)$ into a priority queue using gain values as
    keys and marks them. Next, it extracts a vertex with a maximum key from the priority queue and
    performs the corresponding move.
    If a neighbor of the vertex is unmarked then it is marked and inserted
    in the priority queue.
    If a neighbor of the vertex is already in the priority queue then its key (gain) is updated.
    Note that not every move can be performed due to the size-constraint on the blocks.
    The algorithm stops when the adaptive stopping rule by Osipov and Sanders~\cite{osipov2010n} decides to stop
    or when the priority queue is empty.
    More precisely, if the overall gain is negative then the stopping rule 
    estimates the probability that the overall gain will become positive again and signals to stop if this is unlikely.
    In the end, the best partition that has been seen during the process is reconstructed.
    In one local iteration, this is repeated until the todo list is empty.
    \revision{
   	After a local iteration, the algorithm reinserts moved vertices into the todo list (in randomly shuffled order).
        If the achieved gain improvement is larger than a certain percentage (currently $10$ \%) of the total improvement during the current global iteration,
        it continues to perform moves around the vertices currently in the todo list (next local iteration).
        This allows to further decrease the cut size without significant impact to the running time. 
        When improvements fall below this threshold, another (global) iteration is started that
        initializes the todo list with all boundary vertices.
        After a fixed number of global iterations (currently~3), the MLS algorithm stops.
        Our experiments show that three global iterations is a fair trade-off between the running time and
        the quality of the partition.
        This nested loop of local and global iterations is an improvement over the original MLS search from~\cite{kaffpa}
        since they allow for a better control of the running time of the algorithm.
	}

    The running time of one local iteration
    is $\mathcal{O}(n + \sum_{v \in V} d(v)^2)$.
    Because each vertex can be moved only once during a local iteration
    and we update the gains of its neighbors using a bucket heap.
    Since we update the gain of a vertex at most $d(v)$ times,
    the $d(v)^2$ term is the total cost to update the gain of a vertex $v$.
    Note, that this is an upper bound for the worst case, usually local
    search stops significantly earlier due the stopping rule or an empty priority queue.

\section{Parallel Multi-level Graph Partitioning} \label{parallel_mgp}
\revision{Profiling the sequential algorithm shows that each of the components of the multi-level scheme has a significant contribution to the overall algorithm.}
Hence, we now describe the parallelization of each phase of the multi-level algorithm described above.
The section is organized along the phases of the multi-level scheme: first we show how to parallelize coarsening, then initial partitioning and finally uncoarsening.
\revision{Our general approach is to avoid bottlenecks as well as performing independent work as much as possible.}

\subsection{Coarsening}
In this section, we present the parallel version of the size-constraint label
propagation algorithm to build a clustering and
the parallel contraction algorithm.

\subsubsection*{Parallel Size-Constraint Label Propagation} \label{parallel_mgp:clustering}
\revision{To parallelize the size-constraint label propagation algorithm,
we adapt a clustering technique by Staudt and Meyerhenke~\cite{staudt2016engineering} to coarsening.
Initially, we sort the vertices by increasing degree using the fast
parallel sorting algorithm by Axtmann et al.~\cite{DBLP:conf/esa/AxtmannWF017}.
We then form work packets representing a roughly equal amount of work and insert them into a TBB
concurrent queue~\cite{TBB}.
Note that we also tried the work-stealing approach from~\cite{singler2007mcstl} but it showed worse running times.
Our constraint is that a packet can contain at most vertices with a total number of $B$ neighbors.
We set $B=\max(1\,000,\sqrt{m})$ in our experiments -- the $1\,000$ limits contention for small instances
and the term $\sqrt{m}$ further reduces contention for large instances.
Additionally, we have an empty queue
$Q'$ that stores packets of vertices for the next iteration.
During an
iteration, each PE checks if the queue $Q$ is not empty, and if so it
extracts a packet of active vertices from the queue.  A PE then chooses
a new cluster for each vertex in the currently processed packet.  A
vertex is then moved if the cluster size is still feasible to take on
the weight of the vertex. Cluster sizes are updated atomically using a CAS instruction. This is important to guarantee that the size constraint is not violated.
Neighbors of moved vertices are inserted into a packet for the next iteration. If the sum of vertex degrees in that packet
exceeds the work bound $B$ then this packet is inserted into 
queue~$Q'$ and a new packet is created for subsequent vertices.}
When the queue~$Q$ is empty, the main PE exchanges~$Q$ and~$Q'$ and we proceed with the next iteration.
One iteration of the algorithm can be done with $\mathcal{O}(n + m)$ work and
$\mathcal{O}(\frac{n + m}{p} + \log p)$ span.

\subsubsection*{Parallel Contraction}\label{parallel_mgp:contraction}
The contraction algorithm takes a graph $G = (V, E)$ as well as a clustering $C$ and constructs
a coarse graph $G'=(V', E')$.
The contraction process consists of three phases: the remapping of cluster IDs to a consecutive set of IDs,
edge weight accumulation, and the construction of the coarse graph.
The remapping of cluster IDs assigns new IDs in the range $[0, |V'| - 1]$ to the clusters where $|V'|$ is the 
number of clusters in the given clustering. 
We do this by
calculating a prefix sum on an
array that contains ones in the positions equal to the current cluster IDs. This
phase runs in $\mathcal{O}(n)$ time when it is done sequentially.
Sequentially, the edge weight accumulation step calculates weights of edges in $E'$ using hashing.
More precisely, for each cut edge $(v, u) \in E$ we insert a pair $(C[v], C[u])$ such that $C[v] \not= C[u]$ into a hash table and
accumulate weights for the pair if it is already contained in the table. Due to hashing cut edges, the expected 
running time of this phase is $\mathcal{O}(|E'| + m)$.
To construct the coarse graph we iterate over all edges $E'$ contained in the hash table.
This takes time $\mathcal{O}(|V'| + |E'|)$.
Hence, the total expected running time to compute the coarse graph is $\mathcal{O}(m + n + |E'|)$ when~run~sequentially.


The parallel contraction algorithm works as follows.
First, we remap the cluster IDs using the parallel prefix sum algorithm by Singler et al.~\cite{singler2007mcstl}.
Edge weights are accumulated by iterating over the edges of the original graph in parallel.
We use the concurrent hash table of Maier and Sanders~\cite{maier2016concurrent} initializing it
with a capacity of $\min(\mathrm{avg\_deg} \cdot |V'|, |E'| / 10)$.
Here $\mathrm{avg\_deg} = 2|E|/|V|$ is the average degree of $G$ since we hope that the average degree of $G'$ remains the same.
The third phase is performed sequentially in the current implementation since profiling indicates that it is so fast
that it is not a bottleneck.

\subsection{Initial Partitioning}\label{parallel_mgp:init_part}
To improve the quality of the resulting partitioning of the coarsest graph $G' = (V', E')$,
we partition it into $k$ blocks $\max(p, I)$ times instead of $I$ times.
We perform each partitioning step independently in parallel using different random seeds.
To do so, each PE creates a copy of the coarsest graph and runs KaHIP sequentially on it. 
Assume that one partitioning can be done in $T$ time. Then
$\max(p, I)$ partitions can be built
with $\mathcal{O}(\max(p, I) \cdot T + p\cdot (|E'| + |V'|))$ work
and $\mathcal{O}(\frac{\max(p, I) \cdot T}{p} + |E'| + |V'|)$ span,
where
the additional terms $|V'|$ and $|E'|$ account for the time each PE copies the~coarsest~graph.

\subsection{Uncoarsening/Local Search} \label{parallel_mgp:refinement}
Our parallel algorithm first uses size-constraint parallel label propagation to improve the current partition and afterwards applies our parallel MLS. \revision{The rationale behind this combination is that label propagation is fast and easy to parallelize and will do all the easy improvements. Subsequent MLS will then invest considerable work to find a few nontrivial improvements.
  In this combination, only few nodes actually need be moved globally which makes it easier to
  parallelize MLS scalably.}
When using the label propagation algorithm to improve a partition, we set the upper bound $U$ to the size-constraint of the partitioning problem $L_{\max}$.

\revision{Parallel MLS works in a nested loop of local and global iterations as in the sequential version.
Initialization of a global iteration uses a simplified parallel shuffling algorithm
where each PE shuffles the nodes it considers into a local bucket and then the queue is
made up of these buckets in random order.}
During a local iteration, each PE extracts vertices
from a producer queue $Q$.
Afterwards, it performs \textit{local} moves around it;
that is, global block IDs and the sizes of the blocks remain \textit{unchanged}.
When the producer queue~$Q$ is empty, the algorithm applies the best found sequences of moves to the global data structures.
Pseudocode of one global iteration of the algorithm can be found in Algorithm~\ref{algorithm:parallel_multitry}.
In the paragraphs that follow, we describe
how to perform local moves in \texttt{PerformMoves} 
and then how to update the global data structures in \texttt{ApplyMoves}.

\begin{algorithm2e}[t]
	\label{algorithm:parallel_multitry}
	\normalsize
	\KwIn{Graph $G=(V, E)$; queue $Q$; \revision{threshold~$\alpha < 1$}}
        \tcp{all vertices not moved}
	\PWhile{$Q$ is \textbf{not} empty} {\label{alg:start}

		$v = Q.pop()$\;

		\lIf{$v$ is moved}{\Continue}

        \BlankLine
		$V_{pq} \gets v \cup \{w \in N(v): w\text{ is not moved}\}$\;

        \BlankLine
		\tcp{priority queue with gain as key}
		$PQ \gets \{(gain(w), w) : w \in V_{pq}\}$\;
        \BlankLine
		\tcp{try to move boundary vertices}
		\PerformMoves{G, PQ}\;
	}
        
	
	$stop \gets \mathit{true}$ \tcp*{signal other PEs to stop}
	\If{main thread}{
		\revision{
		$gain \gets \ApplyMoves{G, Q}$ \\
		\If{$gain > \alpha \cdot total\_gain$}{
			$total\_gain \gets total\_gain + gain$; Go to \ref{alg:start}\;
		}
		}
	}
	\caption{Parallel Multi-try $k$-way Local Search.}
\end{algorithm2e}


\subsubsection*{Performing moves (\texttt{PerformMoves})}
Starting from a single boundary vertex, each PE moves vertices to find a sequence of moves that decreases the cut.
However, all moves are local; that is, they do not affect the current global partition --
moves are stored in the local memory of the PE performing them.
To perform a move, a PE chooses a vertex with maximum gain and marks it so that
other PEs cannot move it. Then, we update the sizes of the affected blocks and save the move.
During the course of the algorithm, we store the sequence of moves yielding the best cut.
We stop if there are no moves to perform or the adaptive stopping rule signals the algorithm to stop.
When a PE finished, the sequences of moves yielding the largest decrease in the edge cut is returned.

\subsubsection* {Implementation Details of \texttt{PerformMoves}} \label{impl_details:perform_moves}
In order to improve scalability, only the array for marking moved
vertices is global.  Note that within a local iteration, only bits in this
array are set (using CAS) and they are never unset.  Hence, the
marking operation can be seen as priority update operation (see Shun
et al.~\cite{shun2013reducing}) and thus causes only little
contention.  The algorithm keeps a local array of block sizes, a
local priority queue, and a local hash table storing changed block IDs
of vertices.  Note that since the local hash table is small, it often
fits into cache which is crucial for parallelization due to memory
bandwidth limits.  When the call of \texttt{PerformMoves} finishes and
the thread executing it notices that the queue $Q$ is empty, it sets a
global variable to signal the other threads to finish the current
call of the function \texttt{PerformMoves}.

Let each PE process a set of edges $\mathcal{E}$ and a set of vertices $\mathcal{V}$.
  Since each vertex can be moved only by one PE and moving a vertex requires the gain computation of its neighbors, the span of the function
  \texttt{PerformMoves} is $\mathcal{O}(\sum_{v \in \mathcal{V}} \sum_{u \in N(v)}d(u) + |\mathcal{V}|) =
  \mathcal{O}(\sum_{v \in \mathcal{V}} d^2(v) + |\mathcal{V}|)$ since the gain of a vertex~$v$ can be updated
  at most $d(v)$ times.
  Note that this is a pessimistic bound and it is possible to implement this function with
  $\mathcal{O}(|\mathcal{E}| \log \Delta + |\mathcal{V}|)$ span.
  In our experiments, we use the implementation with the former running time since it requires less memory 
  and the worst case -- the gain of a vertex $v$ is updated $d(v)$ times -- is quite unlikely.

\subsubsection*{Applying Moves (\texttt{ApplyMoves})}\label{paralle_mgp:apply_moves}
Let $M_i = \{B_{i1}, \dots\}$ denote the set of sequences of moves performed by PE~$i$,
where $B_{ij}$ is a set of moves performed by $j$-th call of \texttt{PerformMoves}.
We apply moves sequentially in the following order $M_1, M_2, \dots, M_p$.
We can not apply the moves directly in parallel since a move done by one PE can affect a move done by another PE.
More precisely, assume that we want to move a vertex $v \in B_{ij}$ but we have already moved its neighbor $w$ on a different PE.
Since the PE only knows local changes, it calculates the gain to move $v$ (in \texttt{PerformMoves}) according 
to the old block ID of $w$.
If we then apply the rest of the moves in $B_{ij}$ it may even increase the cut.
To prevent this, we recalculate the gain of each move in a given sequence and remember the best cut.
If there are no affected moves, we apply all moves from the sequence. Otherwise we apply only the part of 
the moves that gives the best cut with respect to the correct gain values.
\revision{
Finally, we insert all moved vertices into the queue $Q$.
}
Let~$M$ be the set of all moved vertices during this procedure. The overall running time is then given by 
$\mathcal{O}(\sum_{v \in M} d(v))$.
\revision{Note that our initial partitioning algorithm generates balanced solutions. Since moves are applied sequentially our parallel local search algorithm maintains balanced solutions; \ie~the balance constraint of our solution is never violated.}

\subsection{Differences to Mt-Metis}
    \revision{We now discuss the differences between our algorithm and Mt-Metis.
    In the coarsening phase, our approach uses a cluster contraction scheme while Metis is using a matching-based scheme.
Our approach is especially well suited for networks that have a pronounced and hierarchical cluster structure. For example, in networks that contain star-like structures, a matching-based algorithm for coarsening matches only a single edge within these structures and hence cannot shrink the graph effectively. Moreover, it may contract
``wrong'' edges such as bridges. 
    Using a clustering-based scheme, however, ensures that the graph sizes shrink very fast in the multi-level
    scheme~\cite{meyerhenke2017parallel}.  
    The general initial partitioning scheme is similar in both algorithms.
    However, the employed sequential techniques differ because different sequential tools (KaHIP and Metis) are used to partition the coarsest graphs.
    }
    In terms of local search, unlike Mt-Metis, our approach guarantees that the updated partition is
    balanced if the input partition is balanced and that the cut can only decrease or stay the same.
    The hill-climbing technique, however, may increase the cut of the input partition or may compute an
    imbalanced partition even if the input partition is balanced.
    Our algorithm has these guarantees since each PE performs moves of vertices locally in parallel.
    When all PEs finish, one PE globally applies the best sequences of local moves computed by all PEs.
    Usually, the number of applied moves is significantly smaller than the number
    of the local moves performed by all PEs, especially on large graphs.
    Thus, the main work is still made in parallel.
    Additionally, we introduce a cache-aware hash table in the following section
    that we use to store local changes
    of block IDs made by each PE. This hash table is more compact than an array and
    takes the locality of data~into~account.
    
\section{Further Optimization} \label{impl_details}
In this section, we describe further optimization techniques that we use to achieve better speed-ups
and overall speed.
More precisely, we use cache-aligned arrays to mitigate the problem of false-sharing, the TBB scalable
allocator~\cite{TBB} for concurrent memory allocations and pin threads to cores to
avoid rescheduling overheads. Additionally, we use a cache-aware hash table which we describe now.
In contrast to usual hash tables, this hash table allows us to exploit locality of data and hence to reduce the overall running time of the algorithm.

\subsection{Cache-Aware Hash Table}
  The main goal here  is to improve the performance of our algorithm on large graphs.
  For large graphs, the gain computation in the MLS routine takes most of the time.
  Recall, that computing the gain of a vertex requires a local hash table.
  Hence, using a cache-aware technique reduces the overall running time.
  A cache-aware hash table combines both properties of an array and a hash table.
  It tries to store data with similar integer keys within the same cache line,
  thus reducing the cost of subsequent accesses to these keys.
  On the other hand, it still consumes less memory than an array which
  is crucial for the hash table to fit into caches.

  We implement a cache-aware hash table using the linear probing technique and the tabular hash
  function~\cite{Patrascu:2011:PST:1993636.1993638}.
  Linear probing typically outperforms other collision resolution techniques in practice
  and the computation of the tabular hash function can be done with a small overhead.
  The tabular hash function works as follows.
  Let $x = x_1 \dots x_k$ be a key to be hashed, where $x_i$ are $t$ bits of the binary representation of~$x$.
  Let $T_i, i \in [1, k]$ be tables of size $2^t$, where each element is a random $32$-bit integer.
 Using $\oplus$ as exclusive-or operation, the tabular hash function is then defined as follows:\[h(x) = T_1[x_1] \oplus \dots \oplus T_k[x_k].\]

\subsubsection*{Exploiting Locality of Data}
  As our experiments show, the distribution of keys that we access during the computation of the gains is not uniform.
  Instead, it is likely that the time between accesses to two consecutive keys is small.
  On typical systems currently used, the size of a cache line is 64 bytes (16 elements with 4 bytes each).
  Now suppose our algorithm accesses 16 consecutive vertices one after another.
  If we would use an array storing the block IDs of all vertices instead of a hash table, we can access all block IDs of the vertices with only one cache miss.
  A hash table on the other hand does not give any locality guarantees.
  On the contrary, it is very probable that consecutive keys are hashed to completely different parts of the hash table.
  However, due to memory constraints we can not use an array to store block IDs for each PE in the \texttt{PerformMoves} procedure.

 However, even if the arrays fit into memory this would be problematic. To see this let $|L2|$ and $|L3|$ be the sizes of L2 and L3 caches of a given system, respectively, and let $p'$ be the number of PEs used per a socket.
For large graphs, the array may not fit into $\max(|L2|, \frac{|L3|}{p'})$ memory.
  In this case, each PE will access its own array in main memory which affects the running time due to the available memory bandwidth.
  Thus, we want a compact data structure that fits into $\max(|L2|, \frac{|L3|}{p'})$ memory most of the time \emph{and} preserve the
  locality guarantees of an array.

  For this, we modify the tabular hash function from above according to Mehlhorn and Sanders~\cite{MehSan08}.
  More precisely, let $x = x_1 \dots x_{k - 1} x_k$, where $x_k$ are the $t'$ least
  significant bits of $x$ and $x_1, \dots, x_{k - 1}$ are $t$ bits each.
  Then we compute the tabular hash function as follows:
  \[h(x) = T_1[x_1] \oplus \dots \oplus T_{k - 1}[x_{k - 1}] \oplus x_k.\]
  This guarantees that if two keys $x$ and $x'$ differ only in first $t'$ bits and,
  hence, $|x - x'| < 2^{t'}$ then~$|h(x) - h(x')| < 2^{t'}$.
  Thus, if $t' = \mathcal{O}(\log c)$, where $c$ is the size of a cache 
  line, then $x$ and $x'$ are in the same cache line when accessed. This hash function
  introduces at most $2^{t'}$ additional collisions since if we do not consider~$t'$
  least significant bits of a key then at most $2^{t'}$ keys have the same remaining bits.
  In our experiments, we choose $k = 3, t' = 5, t = 10$.

\section{Experiments}
\label{sec:experi}

  \subsection{Methodology}
    We implemented our algorithm \texttt{Mt-KaHIP} (\revision{Multi-threaded Karlsruhe High Quality Partitioning})  within the \texttt{KaHIP}~\cite{kaffpa} framework using \texttt{C++} 
 and the \texttt{C++14} multi-threading library. We plan to make our program available in the framework.
    All binaries are built using \texttt{g++-5.2.0} with the \texttt{-O3} flag and 64-bit index data types.
    We run our experiments on two machines.
    \revision{
    Machine $A$ is an Intel Xeon~E5-2683v2 (2 sockets, 16 cores with Hyper-Threading, 64 threads)
    running at 2.1 GHz with 512GB RAM.
    Machine $B$ is an Intel Xeon~E5-2650v2 (2 sockets, 8 cores with Hyper-Threading, 32 threads)
    running at 2.6 GHz with 128GB RAM.}

    We compare ourselves to \textsf{Mt-Metis 0.6.0} using the default configuration with hill-climbing being enabled (\textit{Mt-Metis}) as well as 
    sequential \textsf{KaHIP 2.0} using the \texttt{fast social} configuration (\textit{KaHIP}) and
    \textsf{ParHIP 2.0}~\cite{meyerhenke2017parallel} using the \texttt{fast social} configuration~(\textit{ParHIP}).
	According to LaSalle and Karypis~\cite{lasalle2013multi} 
    \texttt{Mt-Metis} has better speed-ups and running times compare to~\texttt{ParMetis} and~\texttt{Pt-Scotch}.
    At the same time, it has similar quality of the partition.
    Hence, we do not perform experiments with~\texttt{ParMetis} 
    and~\texttt{Pt-Scotch}.

    Our default value of allowed imbalance is 3\% -- this is one of the values used in~\cite{walshaw2000mpm}.
    \revision{We call a solution imbalanced if at least one block exceeds this amount.}
    By default, we perform ten repetitions for every algorithm using different random seeds for initialization
    and report the arithmetic average of computed cut size and running time on a per instance (graph and number of blocks $k$) basis.
	When further averaging over multiple instances, we use the \emph{geometric mean} for quality and time per edge quantities
	and the \emph{harmonic mean} for the relative speed-up
	in order to give every instance a comparable influence on the final score.  
    If at least one repetition returns an imbalanced partition of an instance then
    we mark this instance imbalanced.
	Our experiments focus on the cases $k \in \{16, 64\}$  and $p \in \{1, 16, 31\}$ to save running time and to
    keep the experimental evaluation simple.

    We use performance plots to present the quality comparisons and scatter plots to present the speed-up and the running time comparisons.
    A curve in a performance plot for algorithm~X is obtained as follows:
    For each instance (graph and $k$), we calculate the normalized value $1 - \frac{\texttt{best}}{\texttt{cut}}$,
    where \texttt{best} is the best cut obtained by any of the considered algorithms and \texttt{cut} is the cut of algorithm~X.
    These values are then sorted. Thus, the result of the best algorithm is in the bottom of the plot.
	We set the value for the instance above $1$  if an algorithm builds an imbalanced partition.
    Hence, it is in the top of~the~plot.
    

    \subsubsection*{Algorithm Configuration}
      Any multi-level algorithm has a considerable number of choices between algorithmic components and tuning parameters.
      We adopt parameters from the coarsening and initial partitioning phases of \texttt{KaHIP}.
      The \texttt{Mt-KaHIP} configuration uses $10$ and $25$ label propagation iterations during coarsening and refinement, respectively,
      partitions a coarse graph $\max(p, 4)$ times in initial partitioning and uses three global iterations of parallel MLS in the refinement~phase.
    \begin{table}[t]
      \centering
      \begin{tabular}{l|r|r||r||r}
      graph & $n$ & $m$ & type & ref. \\
              \hline
              \hline
      amazon                                      & $\approx$0.4M             & $\approx$2.3M        &C& \cite{snap}\\
      youtube                                     & $\approx$1.1M             & $\approx$3.0M        &C& \cite{snap}\\
      amazon-2008                                 & $\approx$0.7M             & $\approx$3.5M        &C& \cite{webgraphWS}\\
      in-2004                                     & $\approx$1.4M             & $\approx$13.6M       &C& \cite{webgraphWS}\\
      eu-2005                                     & $\approx$0.9M             & $\approx$16.1M       &C& \cite{webgraphWS}\\
      packing                                     & $\approx$2.1M             & $\approx$17.5M       &M& \cite{benchmarksfornetworksanalysis}\\
      del23                                       & $\approx$8.4M             & $\approx$25.2M       &M& \cite{kappa}\\
      hugebubbles-00                              & $\approx$18.3M            & $\approx$27.5M       &M& \cite{benchmarksfornetworksanalysis}\\
      channel                                     & $\approx$4.8M             & $\approx$42.7M       &M& \cite{benchmarksfornetworksanalysis}\\
      cage15                                      & $\approx$5.2M             & $\approx$47.0M       &M& \cite{benchmarksfornetworksanalysis}\\
      europe.osm                                  & $\approx$50.9M            & $\approx$54.1M       &M& \cite{BMSW13}\\
      enwiki-2013                                 & $\approx$4.2M             & $\approx$91.9M       &C& \cite{webgraphWS}\\
      er-fact1.5-scale23                          & $\approx$8.4M             & $\approx$100.3M      &C& \cite{BMSW13}\\
      hollywood-2011                              & $\approx$2.2M             & $\approx$114.5M      &C& \cite{webgraphWS}\\
      rgg24                                       & $\approx$16.8M            & $\approx$132.6M      &M& \cite{kappa}\\
      rhg                                         & $\approx$10.0M            & $\approx$199.6M      &C& \cite{DBLP:conf/isaac/LoozMP15}\\
      del26                                       & $\approx$67.1M            & $\approx$201.3M      &M& \cite{kappa}\\
      uk-2002                                     & $\approx$18.5M            & $\approx$261.8M      &C& \cite{webgraphWS}\\
      nlpkkt240                                   & $\approx$28.0M            & $\approx$373.2M      &M& \cite{UFsparsematrixcollection}\\
      arabic-2005                                 & $\approx$22.7M            & $\approx$553.9M      &C& \cite{webgraphWS}\\
      rgg26                                       & $\approx$67.1M            & $\approx$574.6M      &M& \cite{kappa}\\
      uk-2005                                     & $\approx$39.5M            & $\approx$783.0M      &C& \cite{webgraphWS}\\
      webbase-2001                                & $\approx$118.1M           & $\approx$854.8M      &C& \cite{webgraphWS}\\
      it-2004                                     & $\approx$41.3M            & $\approx$1.0G        &C& \cite{webgraphWS}\\
      \end{tabular}
      \caption{Basic properties of the benchmark set with a rough type classification. C stands for complex networks, M is used for mesh type networks.}
       \label{tab:scalefreegraphstable}
       \vspace{-0.75cm}
    \end{table}
    \subsubsection*{Instances}
      We evaluate our algorithms on a number of large graphs. These graphs are collected
      from~\cite{benchmarksfornetworksanalysis,UFsparsematrixcollection,BoVWFI,snap,DBLP:conf/isaac/LoozMP15,BMSW13}.
      Table~\ref{tab:scalefreegraphstable} summarizes the main properties of the benchmark set.
      Our benchmark set includes a number of graphs from numeric simulations as well as complex networks (for the latter
      with a focus on social networks and web graphs).

      The \Id{rhg} graph is a complex network generated with NetworKit~\cite{DBLP:conf/isaac/LoozMP15} according to the
      \emph{random hyperbolic graph} model~\cite{Krioukov2010}.
      In this model vertices are represented as points in the hyperbolic
      plane; vertices are connected by an edge if their hyperbolic distance is below a threshold.
      Moreover, we use the two graph families \Id{rgg} and \Id{del} for
      comparisons.
      \Id{rgg$X$} is a \emph{random geometric graph} with
      $2^{X}$ vertices where vertices represent random points in the (Euclidean) unit square and edges
      connect vertices whose Euclidean distance is below $0.55 \sqrt{ \ln n / n }$.
      This threshold was chosen in order to ensure that the graph is almost certainly connected.
      \Id{del$X$} is a Delaunay triangulation of $2^{X}$
      random points in the unit square.
      The graph \Id{er-fact1.5-scale23} is generated using the Erd{\"o}s-R{\'e}nyi $G(n, p)$ model with $p = 1.5 \ln n / n$.

\subsection{Quality Comparison}

  In this section, we compare our algorithm against competing state-of-the-art algorithms in terms of quality.
  The performance plot in Figure~\ref{fig:cut_plot} shows the results of our
  experiments performed on \revision{machine A} for all of our benchmark graphs shown in Table~\ref{tab:scalefreegraphstable}.

  Our algorithm gives the best overall quality, usually producing the
  overall best cut. Even in the small fraction of instances where
  other algorithms are best, our algorithm is at most \revision{$7\%$} off.
  The overall solution quality does not heavily depend on the number of PEs used.
  Indeed, more PEs give slightly higher partitioning quality since more initial
  partition attempts are done in parallel. The original fast social configuration of \texttt{KaHIP} as well
  as \texttt{ParHIP} produce worse quality than \texttt{Mt-KaHIP}.
  This is due to the high quality local search scheme that we use; \ie, parallel MLS significantly improves solution quality.
  \texttt{Mt-Metis} with $p = 1$ has worse quality than our algorithm on almost all instances.
  The exceptions are seven mesh type networks and one complex network.
  For \texttt{Mt-Metis} this is expected since it is considerably faster than our algorithm.
  However, \texttt{Mt-Metis} also suffers from deteriorating quality and many
  imbalanced partitions as the number of PEs goes up. This is mostly the case for complex networks.
  This can also be seen from the geometric means of the cut sizes over all instances, including the imbalanced solutions.
  For our algorithm they are \revision{$727.2$K, $713.4$K and $710.8$K for $p = 1, 16, 31$}, respectively.
  For  \texttt{Mt-Metis} they are \revision{$819.8$K, $873.1$K and $874.8$K for $p = 1, 16, 31$}, respectively.
  For \texttt{ParHIP} they are \revision{$809.9$K, $809.4$K and $809.71$K  for $p = 1, 16, 31$}, respectively, and for \texttt{KaHIP} it is $766.2$K.
  For~$p = 31$, the geometric mean cut size of \texttt{Mt-KaHIP}
  is $18.7 \%$ smaller than that of \texttt{Mt-Metis}, $12.2 \%$ smaller than that of
  \texttt{ParHIP} and $7.2 \%$ smaller than that of \texttt{KaHIP}.
  Significance tests that we run indicate that the quality advantage of our solver over the other solvers is statistically significant.
 
  \revision{
    \subsubsection{Effectiveness Tests}
    We now compare the effectiveness of our
    algorithm \texttt{Mt-KaHIP} against our competitors
    using one processor and $31$ processors of machine A.  The idea is to give the
    faster algorithm the same amount of time as the slower algorithm
    for additional repetitions that can lead to improved solutions.%
    \footnote{\revision{Indeed, we asked Dominique LaSalle how
      to improve the quality of Mt-Metis at the expense
      of higher running time and he independently suggested to make
      repeated runs.}}  We have improved an approach used in~\cite{kaffpa}
    to extract more information out of a moderate number
    of measurements. Suppose we want to compare $a$ repetitions of
    algorithm $A$ and $b$ repetitions of algorithm $B$ on instance~$I$.
    We generate a \emph{virtual instance} as follows: We first
    sample one of the repetitions of both algorithms.  Let $t^1_A$ and
    $t^1_B$ refer to the observed running times. Wlog assume $t^1_A\geq t^1_B$.
    Now we sample (without replacement) additional
    repetitions of algorithm $B$ until the total running time
    accumulated for algorithm $B$ exceeds $t^1_A$. Let $t^{\ell}_B$
    denote the running time of the last sample. We accept the last sample of~$B$
    with probability
    $(t^1_A-\sum_{1<i<\ell}t^i_B)/t^{\ell}_B$.
    \begin{theorem}
    The expected total running time of accepted samples for~$B$ is the same as the
    running time for the single repetition of~$A$.
    \end{theorem}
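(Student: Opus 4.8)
The statement is an expectation identity, so the plan is to compute the expected accepted running time of $B$ directly, by conditioning on the full sampled sequence of $B$'s running times $t^1_B, t^2_B, \dots$ (which also fixes the stopping index $\ell$) and then invoking the law of total expectation. The reason for conditioning on the whole sequence is that it strips away all randomness except the single Bernoulli coin that decides whether the final, overshooting sample $t^\ell_B$ is kept; everything else — the stopping index $\ell$ and all partial sums — becomes deterministic under the conditioning.

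First I would fix a realization of the sequence and let $X$ denote the total running time of the accepted samples of $B$, and let $P \Is \sum_{i=1}^{\ell-1} t^i_B$ be the time accumulated before the last sample is drawn. By construction every sample before the last is kept with certainty, contributing $P$, while the last sample $t^\ell_B$ is kept with exactly the probability $p \Is (t^1_A - P)/t^\ell_B$ that makes its expected contribution equal to the residual $t^1_A - P$. Hence the conditional expectation is
\[
\condexpect{X}{t^1_B,\dots,t^\ell_B} \;=\; P + p\,t^\ell_B \;=\; P + (t^1_A - P) \;=\; t^1_A .
\]
The accumulated time $P$ cancels, so the conditional expectation equals $t^1_A$ \emph{identically}, for every realization of the sequence. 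Taking expectations over the sequence and applying the tower property then gives $\expect[X] = t^1_A$, which is precisely the running time of the single repetition of $A$. A pleasant feature of this telescoping is that we never need to know anything about the distribution of $B$'s running times; the cancellation is exact at the level of each realization.

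The step I expect to carry the real content is verifying that $p$ is a genuine probability, that is, $0 \le p \le 1$; this is exactly where the stopping rule enters. The rule stops at the first $\ell$ for which the accumulated time exceeds $t^1_A$, so the partial sums straddle the target: $P \le t^1_A < P + t^\ell_B$. The left inequality gives $t^1_A - P \ge 0$, hence $p \ge 0$, and the strict right inequality gives $t^1_A - P < t^\ell_B$, hence $p < 1$. The assumption $t^1_A \ge t^1_B$, made without loss of generality (otherwise the roles of $A$ and $B$ are swapped), guarantees that the first sample alone does not already overshoot, so $\ell$ is well defined and the construction is well posed. Once the admissibility of $p$ is settled, the telescoping identity above closes the argument.
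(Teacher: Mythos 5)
Your proof is correct and follows essentially the same route as the paper: condition on the sampled sequence so that only the Bernoulli acceptance of the last sample remains random, and observe that the accepted time telescopes to $P + (t^1_A - P) = t^1_A$. The only additions beyond the paper's one-line computation are your explicit use of the tower property and your verification that the acceptance probability lies in $[0,1]$, both of which the paper leaves implicit.
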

    \begin{proof}
   	Let $t = \sum_{1<i<\ell}t^i_B$.
    Consider a random variable $T$ that is the total time of sampled repetitions.
    With probability $p = \frac{t^1_A - t}{t^{\ell}_B}$, we accept $\ell$-th sample and
    with probability~$1 - p$ we decline it.
    Then
    \begin{equation}
    \begin{split}
    \E[T]   &= p \cdot (t + t^{\ell}_B) + (1 - p) \cdot t \\
           & = \frac{t^1_A - t}{t^{\ell}_B} \cdot (t + t^{\ell}_B) + 
    (1 - \frac{t^1_A - t}{t^{\ell}_B}) \cdot t = t^1_A
    \end{split}
    \end{equation}
    \end{proof}	
    
    The quality assumed for~$A$ in this virtual instance is
    the quality of the only run of algorithm~$A$. The quality assumed for~$B$ is
    the best quality obverved for an accepted sample for~$B$.

  For our effectiveness evaluation, we used $20$ virtual instances for
  each pair of graph and $k$ derived from 10 repetitions of each
  algorithm.  Figure~\ref{fig:cut_pairwise_plot} presents the
  performance plot for \texttt{Mt-KaHIP} and \texttt{Mt-Metis}
  for different number of processors.
  As we can see, even with additional running time \texttt{Mt-Metis} has
  mostly worse quality than \texttt{Mt-KaHIP}.
  Consider the effectiveness test where \texttt{Mt-KaHIP} and \texttt{Mt-Metis} run with $31$ threads.
   In $80.4 \%$ of the virtual instances \texttt{Mt-KaHIP} has better quality than \texttt{Mt-Metis}. 
   In the worst-case, \texttt{Mt-KaHIP} has only a $5.5\%$ larger cut than \texttt{Mt-Metis}.
  
  Figure~\ref{fig:cut_pairwise_plot_1} presents the
  performance plot for \texttt{Mt-KaHIP} and \texttt{ParHIP}
  for different number of processors.
  As we can see, even with additional running time \texttt{ParHIP} has
  mostly worse quality than \texttt{Mt-KaHIP}.
  Consider the effectiveness test where \texttt{Mt-KaHIP} and \texttt{ParHIP} run with $31$ threads.
  In $96.5 \%$ of the virtual instances, \texttt{Mt-KaHIP} has better quality than \texttt{ParHIP}.
  In the worst-case, \texttt{Mt-KaHIP} has only a $5.4\%$ larger cut than \texttt{ParHIP}.
  
  Figure~\ref{fig:cut_pairwise_plot_2} presents the
  performance plot for \texttt{Mt-KaHIP} and \texttt{KaHIP}
  for different number of processors.
  As we can see, even with additional running time \texttt{KaHIP} has
  mostly worse quality than \texttt{Mt-KaHIP}.
  Consider the effectiveness test where \texttt{Mt-KaHIP} runs with $31$ threads.
  In $98.9 \%$ of the virtual instances, \texttt{Mt-KaHIP} has better quality than \texttt{KaHIP}.
  In the worst-case, \texttt{Mt-KaHIP} has only a $3.5\%$ larger cut than \texttt{KaHIP}.
  
  }
  \begin{figure}[ht!]
    \includegraphics[width=.475\textwidth]{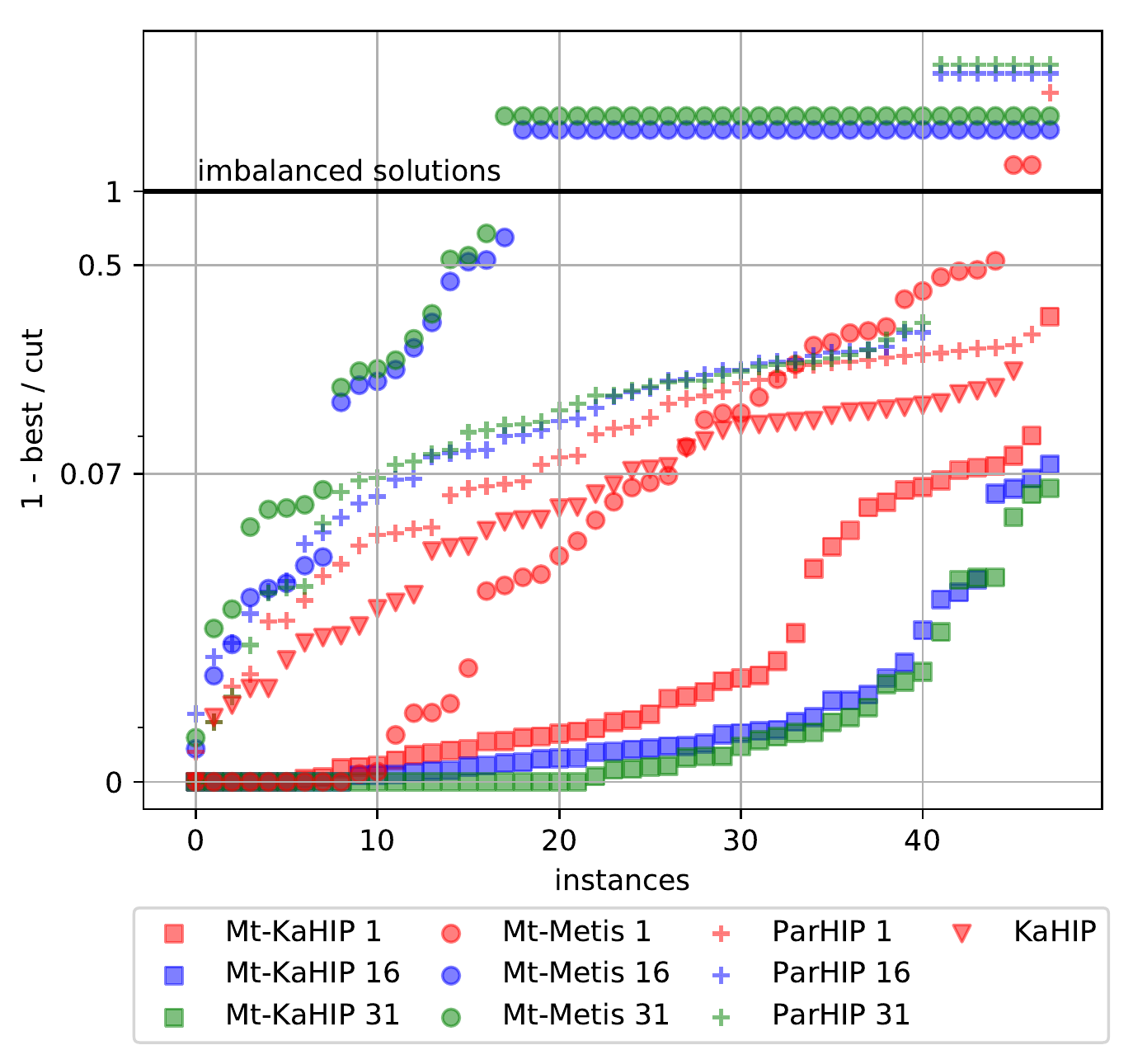}
    \caption{Performance plot for the cut size. The number behind the algorithm name denotes the number of threads used.
    	}
    \label{fig:cut_plot}
  \end{figure}
  \begin{figure}[ht!]
  	\includegraphics[width=.475\textwidth]{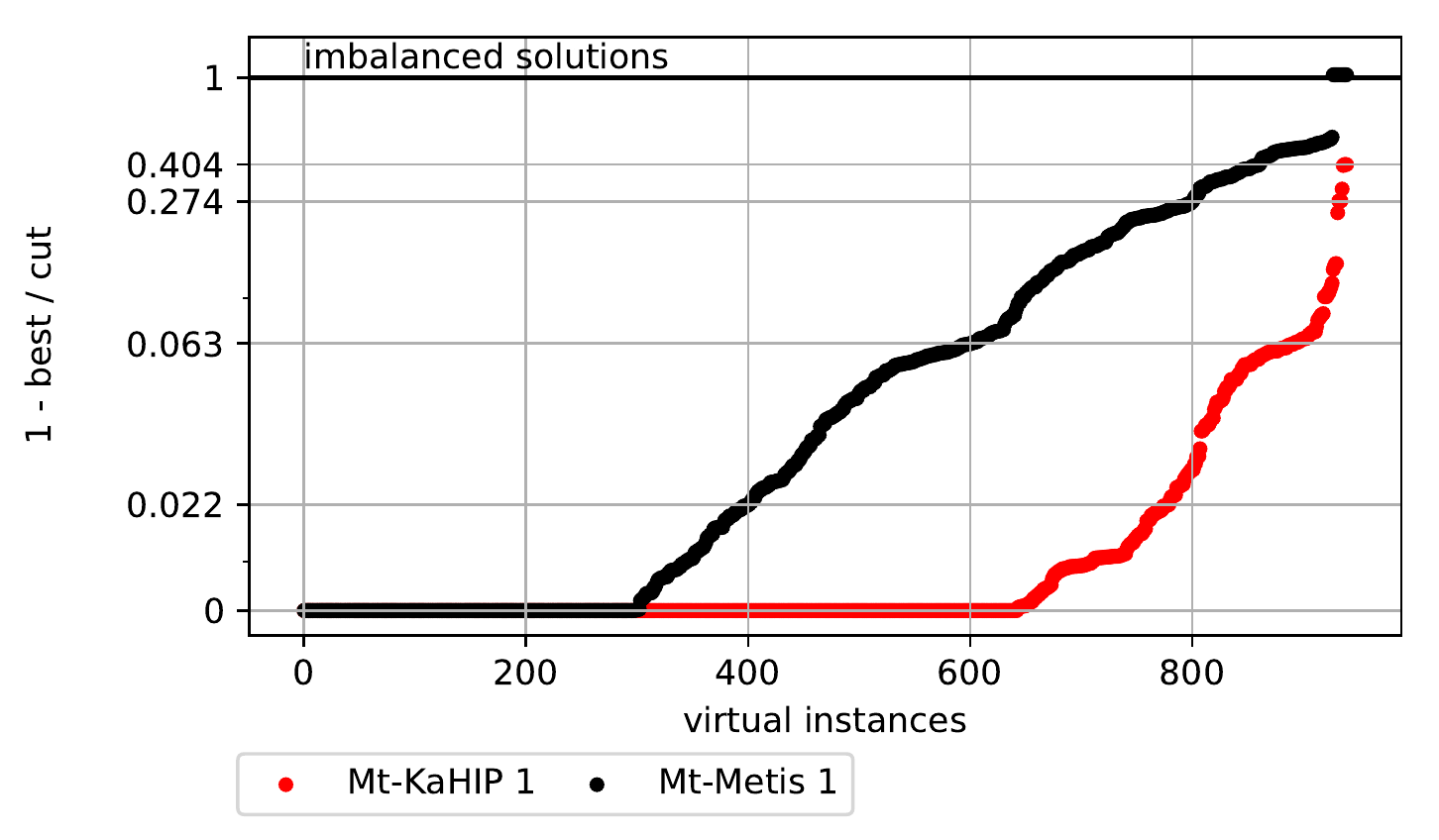}
  	\includegraphics[width=.475\textwidth]{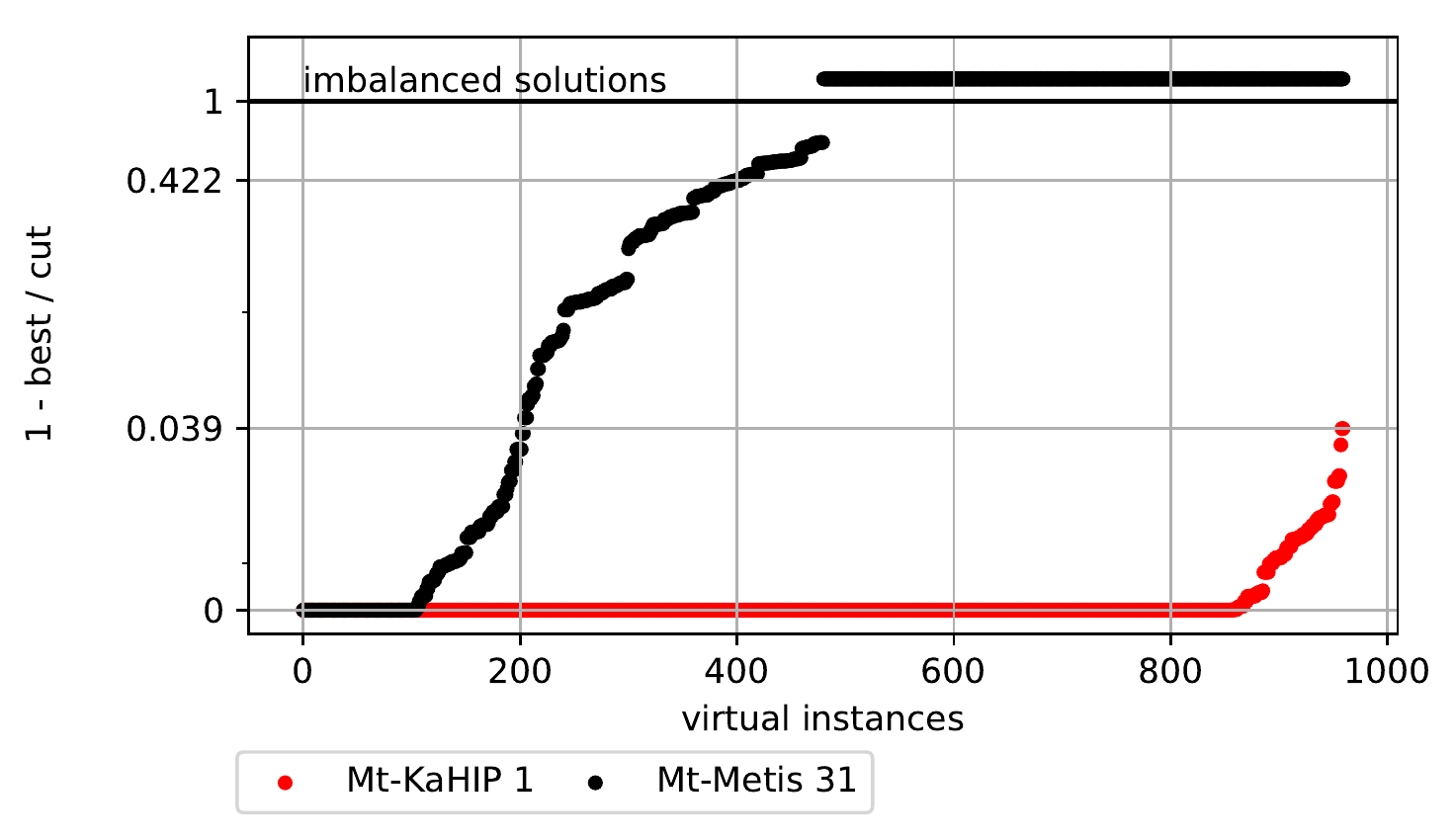}
  	\includegraphics[width=.475\textwidth]{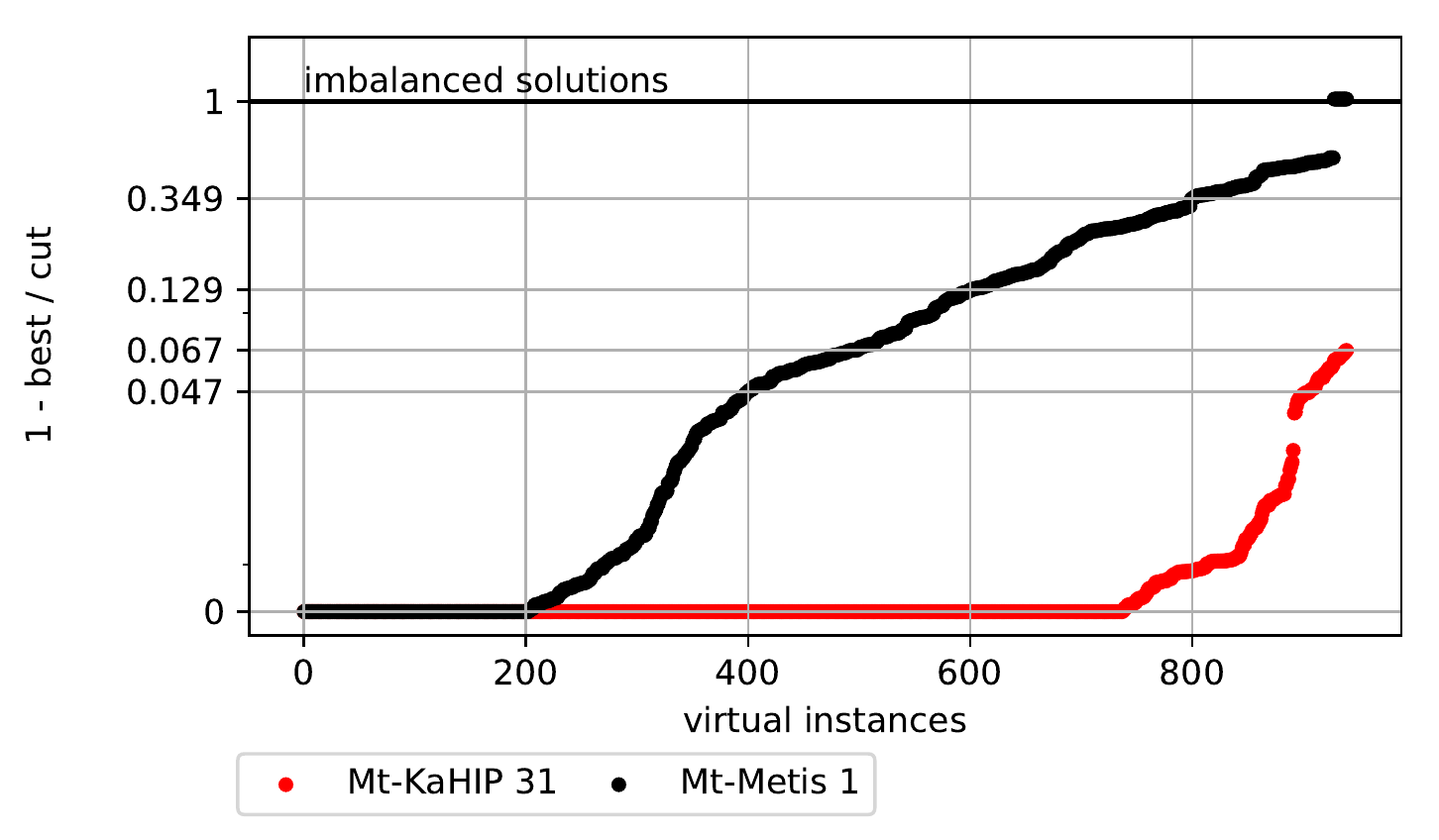}
  	\includegraphics[width=.475\textwidth]{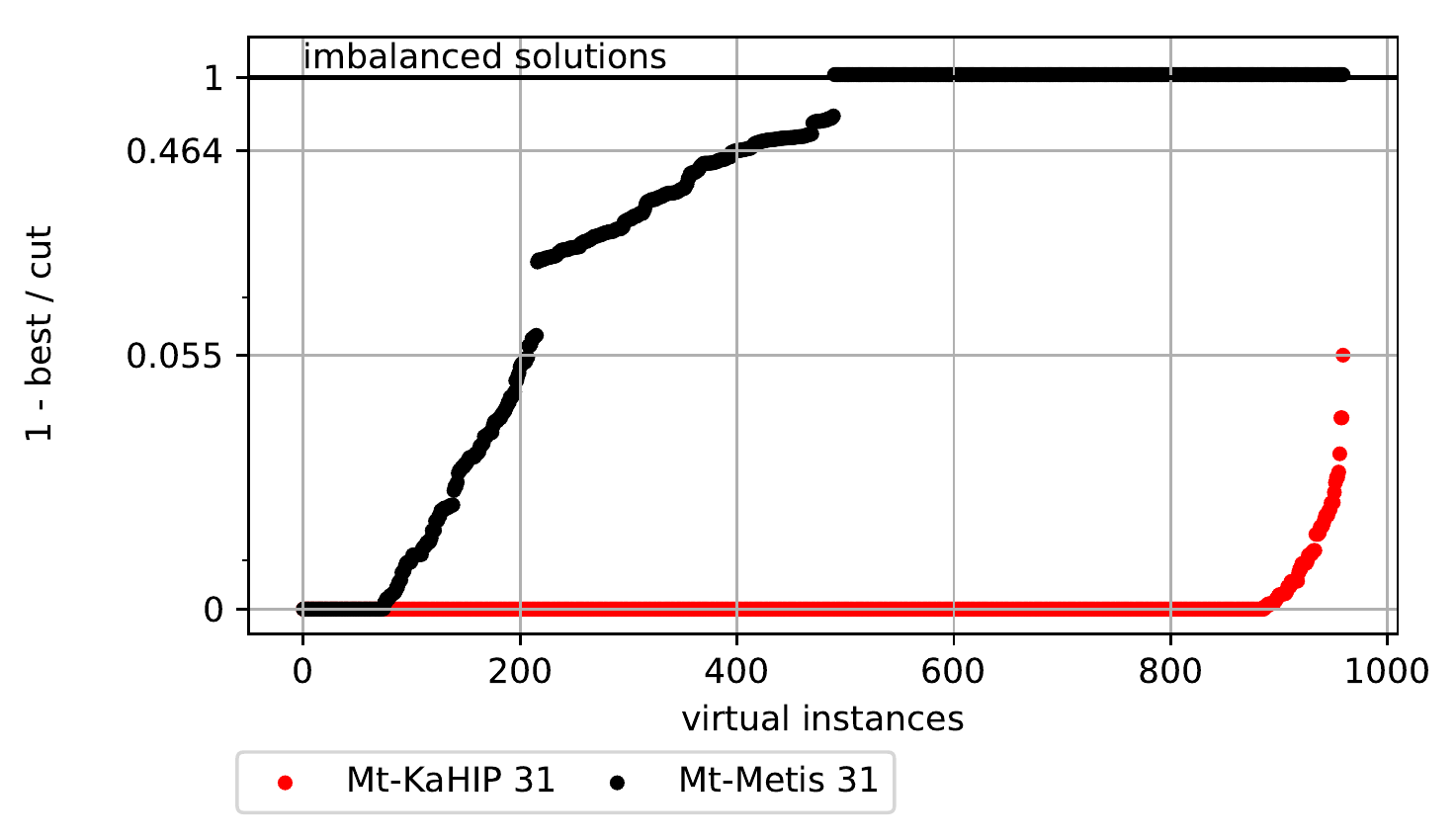}
  	\caption{Effectiveness tests for \texttt{Mt-KaHIP} and \texttt{Mt-Metis}. The number behind the algorithm name denotes the number of threads used.
  	}
  	\label{fig:cut_pairwise_plot}
  \end{figure}
  \begin{figure}[ht!]
  	\includegraphics[width=.475\textwidth]{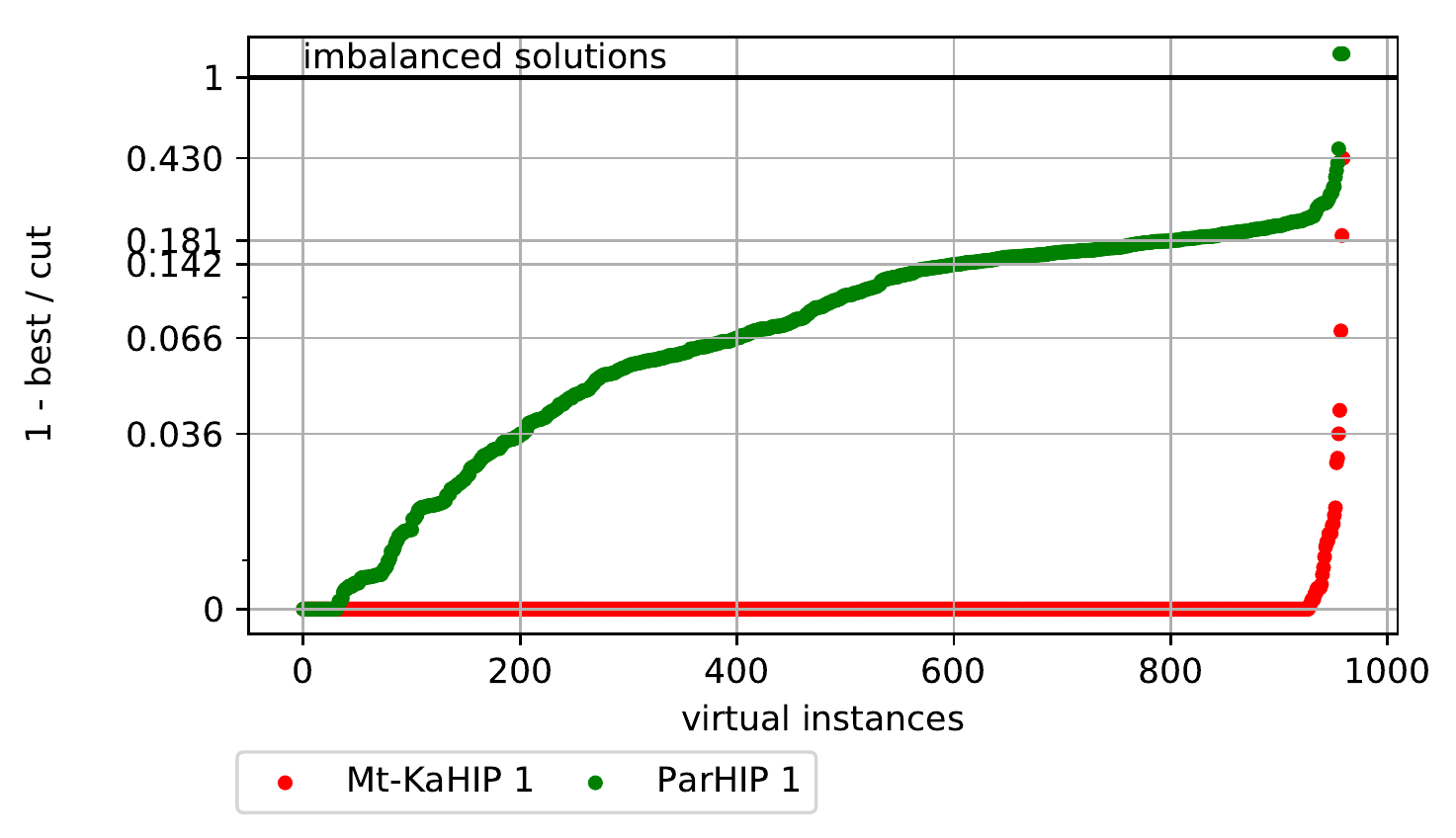}
  	\includegraphics[width=.475\textwidth]{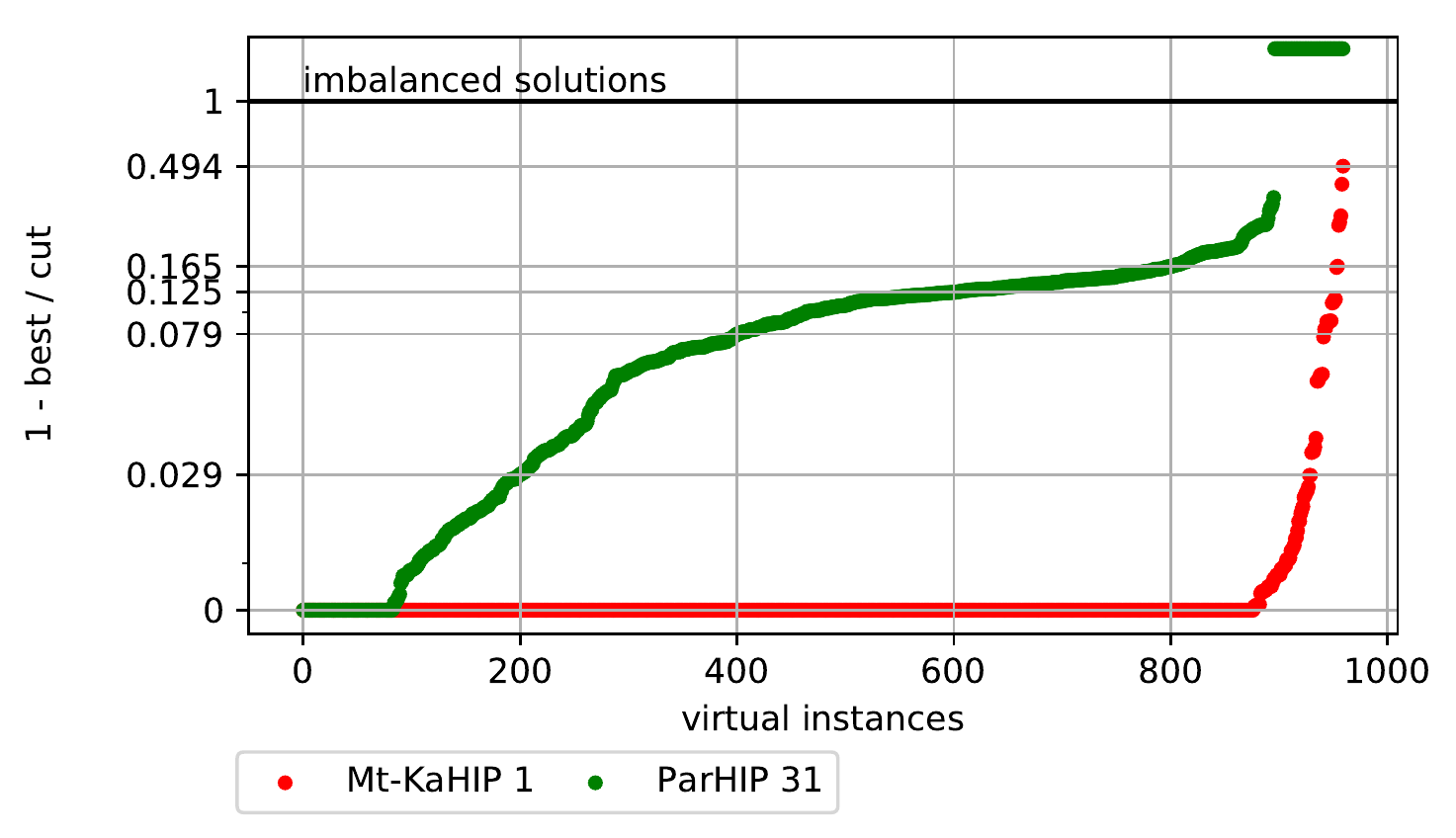}
  	\includegraphics[width=.475\textwidth]{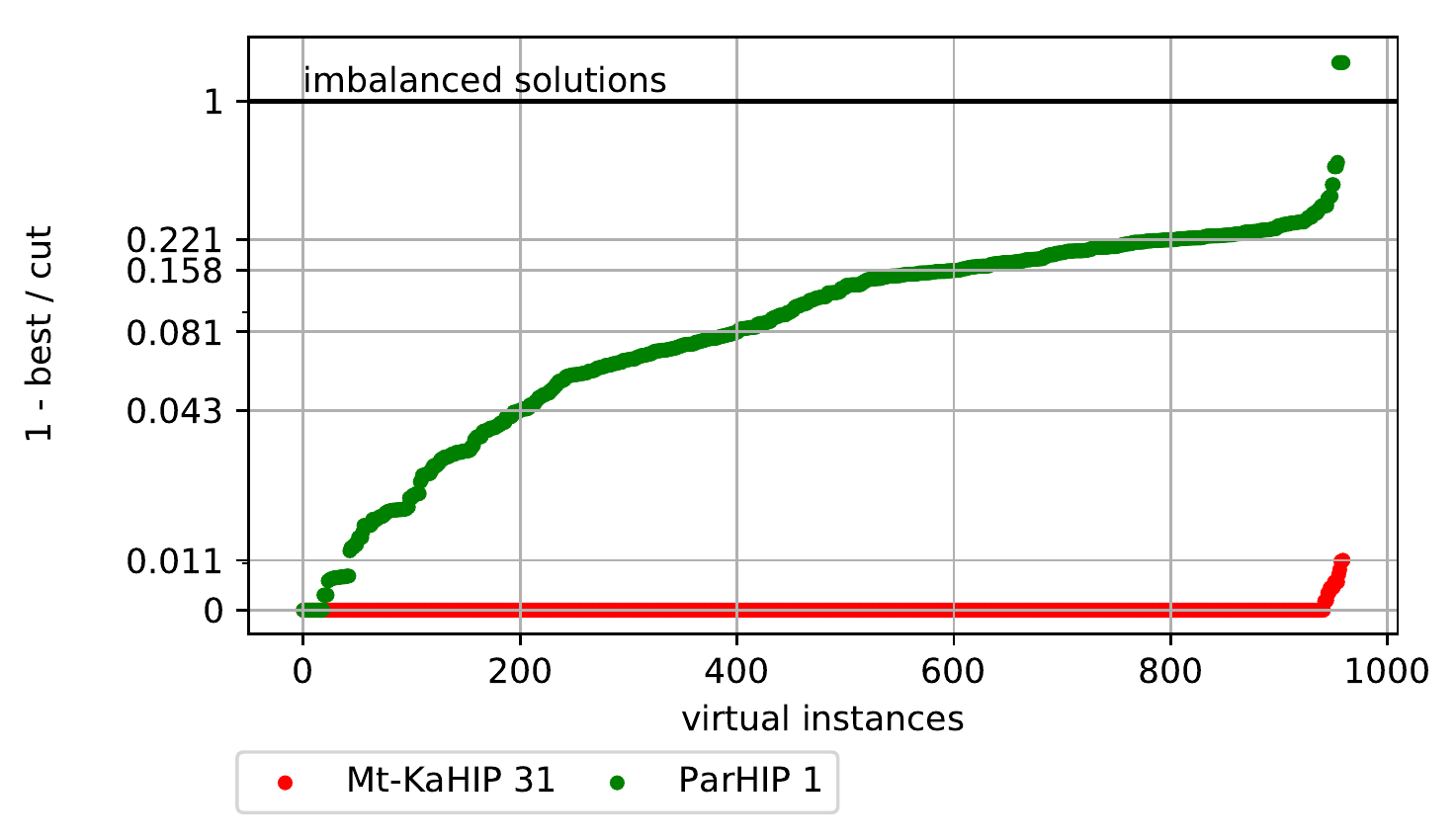}
  	\includegraphics[width=.475\textwidth]{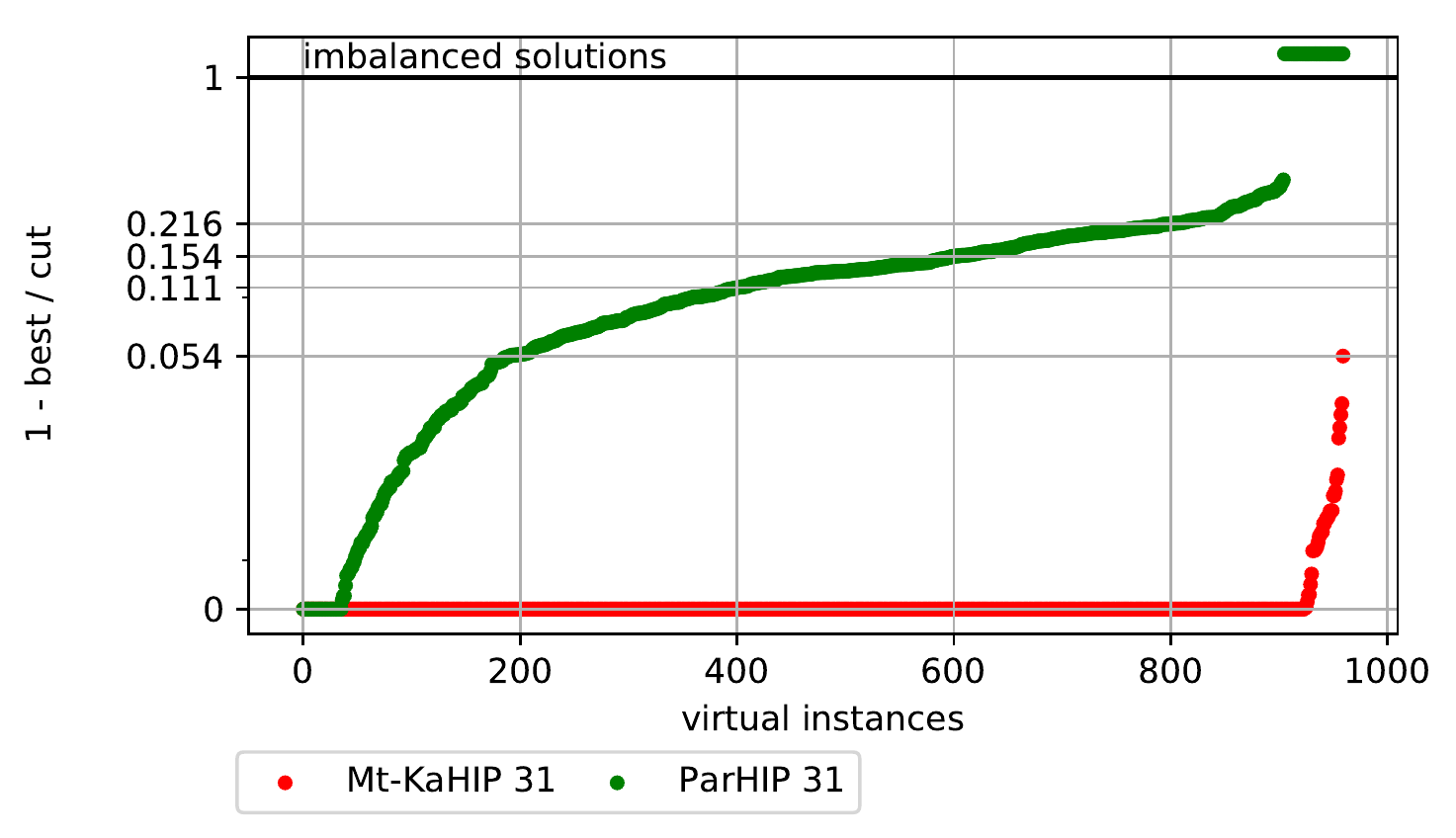}
  	\caption{Effectiveness tests for \texttt{Mt-KaHIP} and \texttt{ParHIP}. The number behind the algorithm name denotes the number of threads used.
  	}
  	\label{fig:cut_pairwise_plot_1}
  \end{figure}
  \begin{figure}[ht!]
  	\includegraphics[width=.475\textwidth]{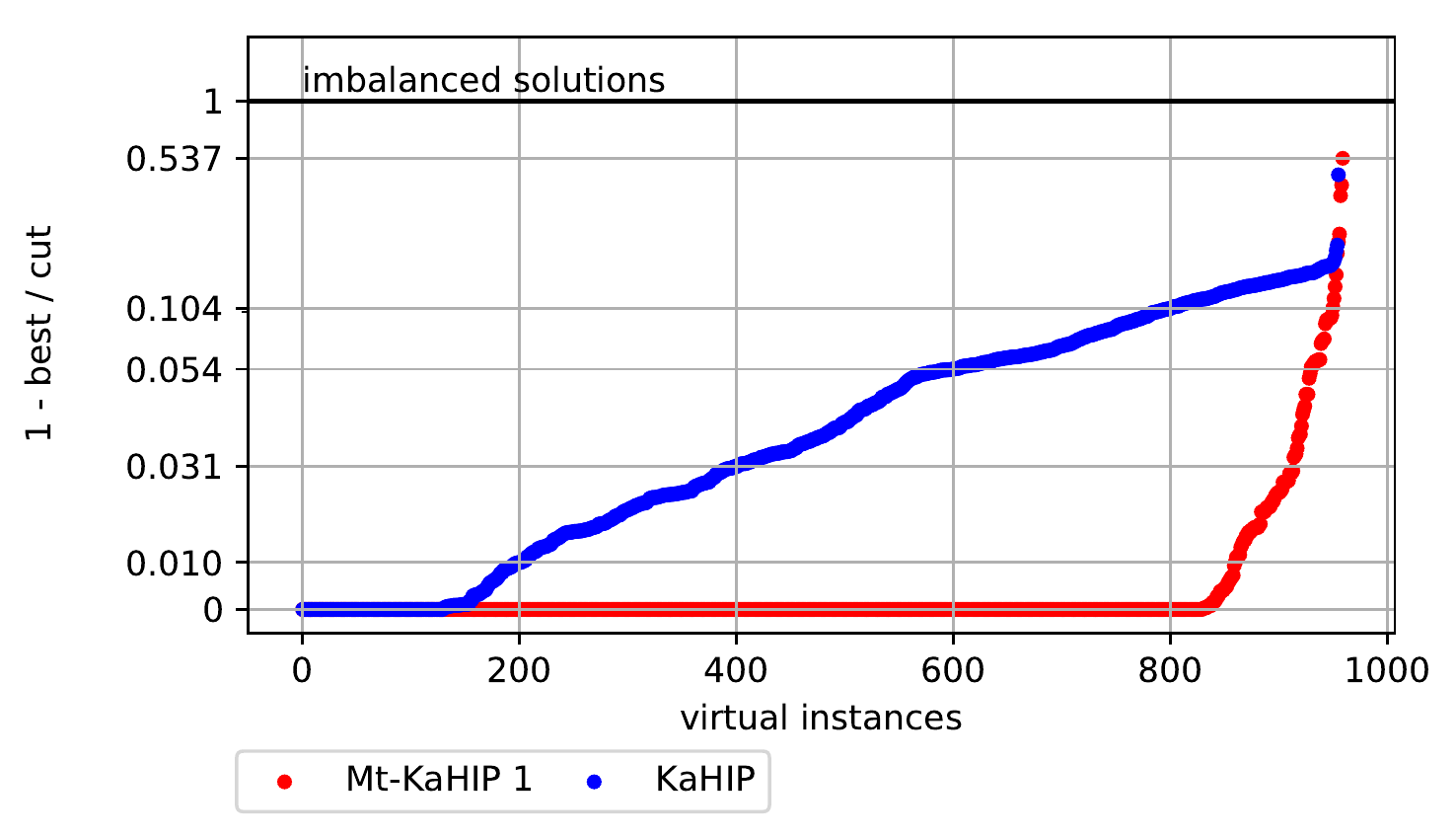}
  	\includegraphics[width=.475\textwidth]{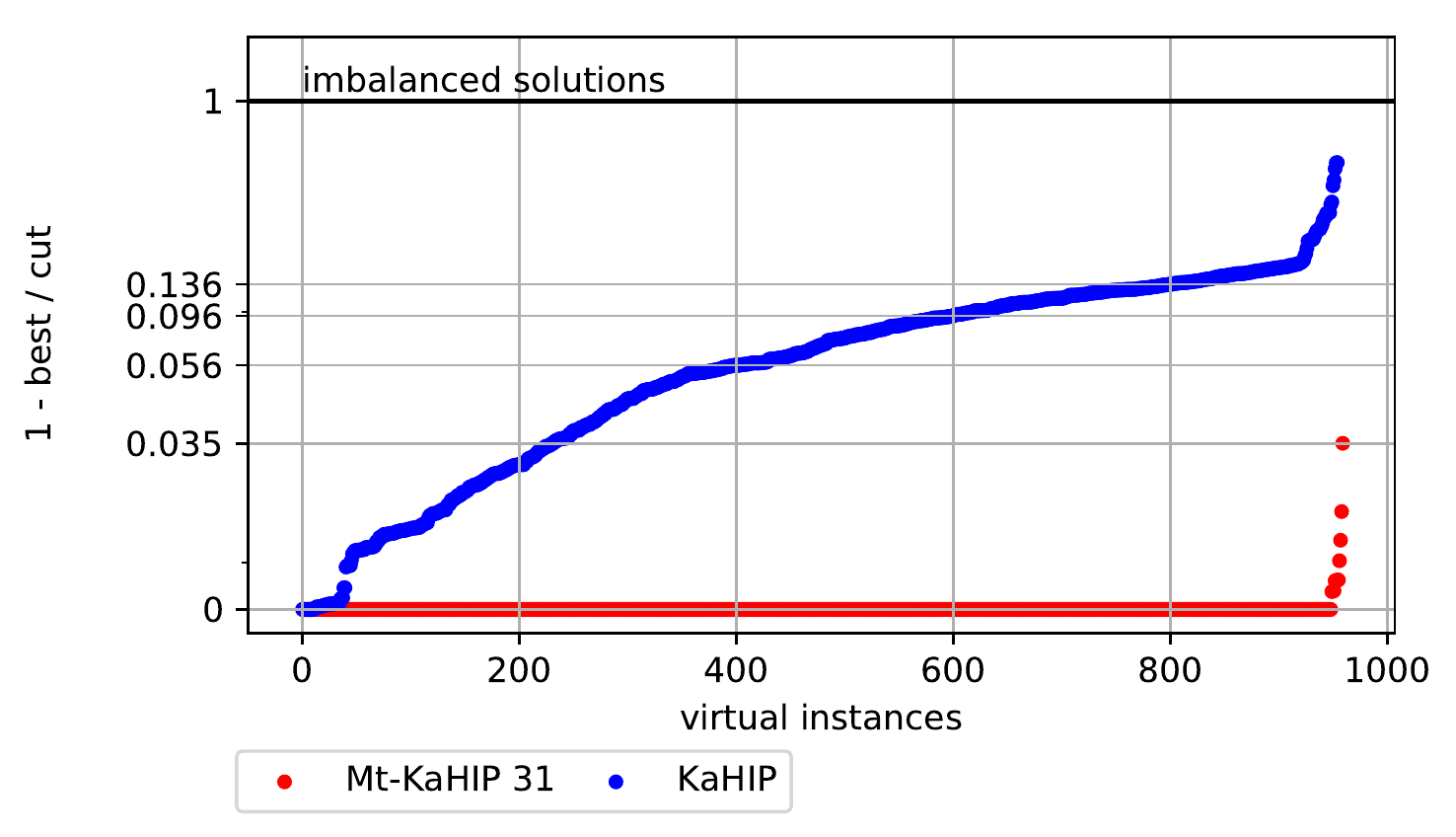}
  	\caption{Effectiveness tests for \texttt{Mt-KaHIP} and \texttt{KaHIP}. The number behind the algorithm name denotes the number of threads used.
  	}
  	\label{fig:cut_pairwise_plot_2}
  \end{figure}
  \subsection{Speed-up and Running Time Comparison}
    In this section, we compare the speed-ups and the running times of our algorithm against competing algorithms.
    We calculate a relative speed-up of an algorithm as a ratio between its running time
    (averaged over ten repetitions) and its running time with $p = 1$. 
    Figure~\ref{fig:ls_rt_16} 
    show scatter plots with speed-ups and time per edge for a full algorithm execution and local search
    (for our algorithm it is MLS) on \revision{machine A}.
    Additionally, we calculate the harmonic mean \textit{only} for instances that were partitioned in ten repetitions
    \textit{without imbalance}.
    Note that among  \revision{the top $20$ speed-ups of $\texttt{Mt-Metis}$ $60\%$ correspond to imbalanced
    instances (\textit{Mt-Metis 31 imbalanced})} thus we believe it is fair to exclude them.

    The harmonic mean full speed-up of our algorithm, $\texttt{Mt-Metis}$ and $\texttt{ParHIP}$ for \revision{$p = 31$
    are $9.1, 11.1$ and $9.5$}, respectively.
    The harmonic mean local search speed-up of our algorithm, $\texttt{Mt-Metis}$ and $\texttt{ParHIP}$
    are  \revision{$13.5, 6.7$ and $7.5$}, respectively.
    Our full speed-ups are comparable to that of $\texttt{Mt-Metis}$
    but our local search speed-ups are significantly better than that of $\texttt{Mt-Metis}$.
    The geometric mean full time per edge of our algorithm, $\texttt{Mt-Metis}$ and $\texttt{ParHIP}$
    are  \revision{$52.3$ nanoseconds (ns), $12.4$ [ns] and $121.9$ [ns]}, respectively.
    The geometric mean local search time per edge of our algorithm,
    $\texttt{Mt-Metis}$ and $\texttt{ParHIP}$ are  \revision{$3.5$ [ns], $2.1$ [ns] and $16.8$~[ns]}, respectively.
    Note that with increasing number of edges, our algorithm has comparable time per edge to 
    $\texttt{Mt-Metis}$.
    Superior speed-ups of parallel MLS are due to minimized interactions between PEs and using cache-aware hash tables locally.
    Although on average, our algorithm is slower than \texttt{Mt-Metis}, we consider this as a fair trade off
    between the quality and the running time.
    We also dominate \texttt{ParHIP} in terms of quality and running times.
    




\begin{figure}[h!]
	\includegraphics[width=.43\textwidth]{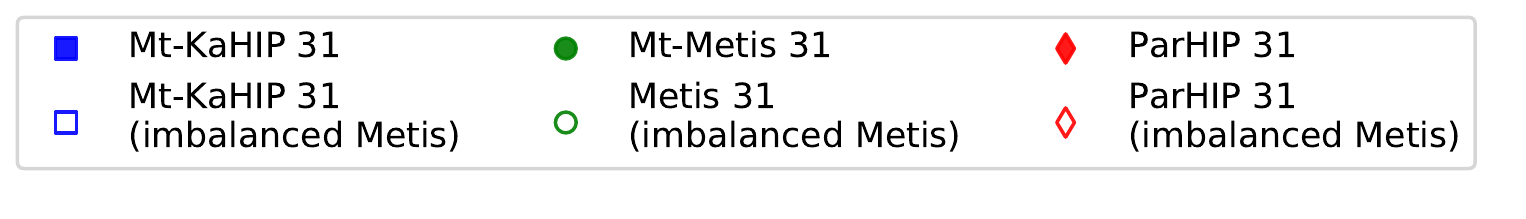}
	\includegraphics[width=.43\textwidth]{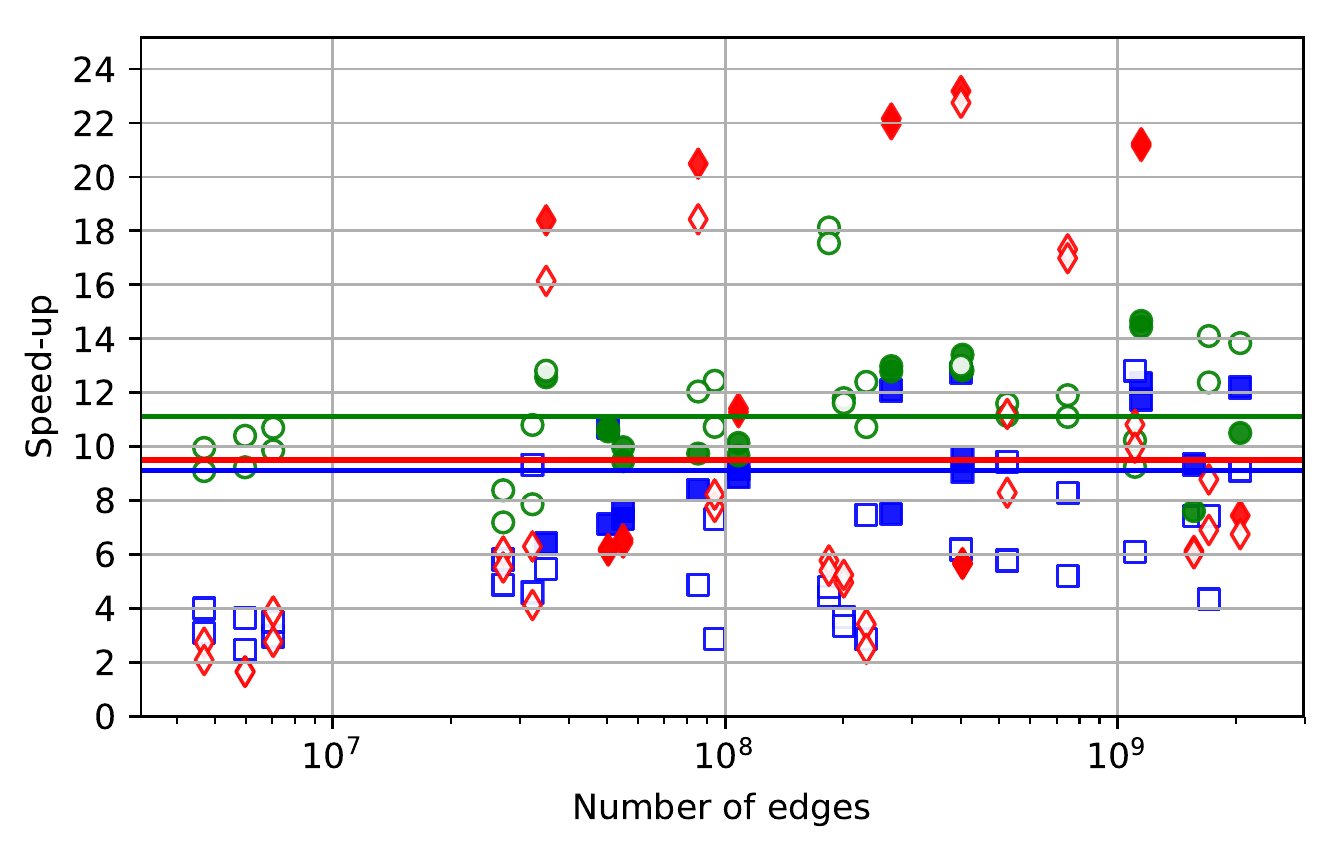}
	\label{fig:full_sp_16}

	\includegraphics[width=.43\textwidth]{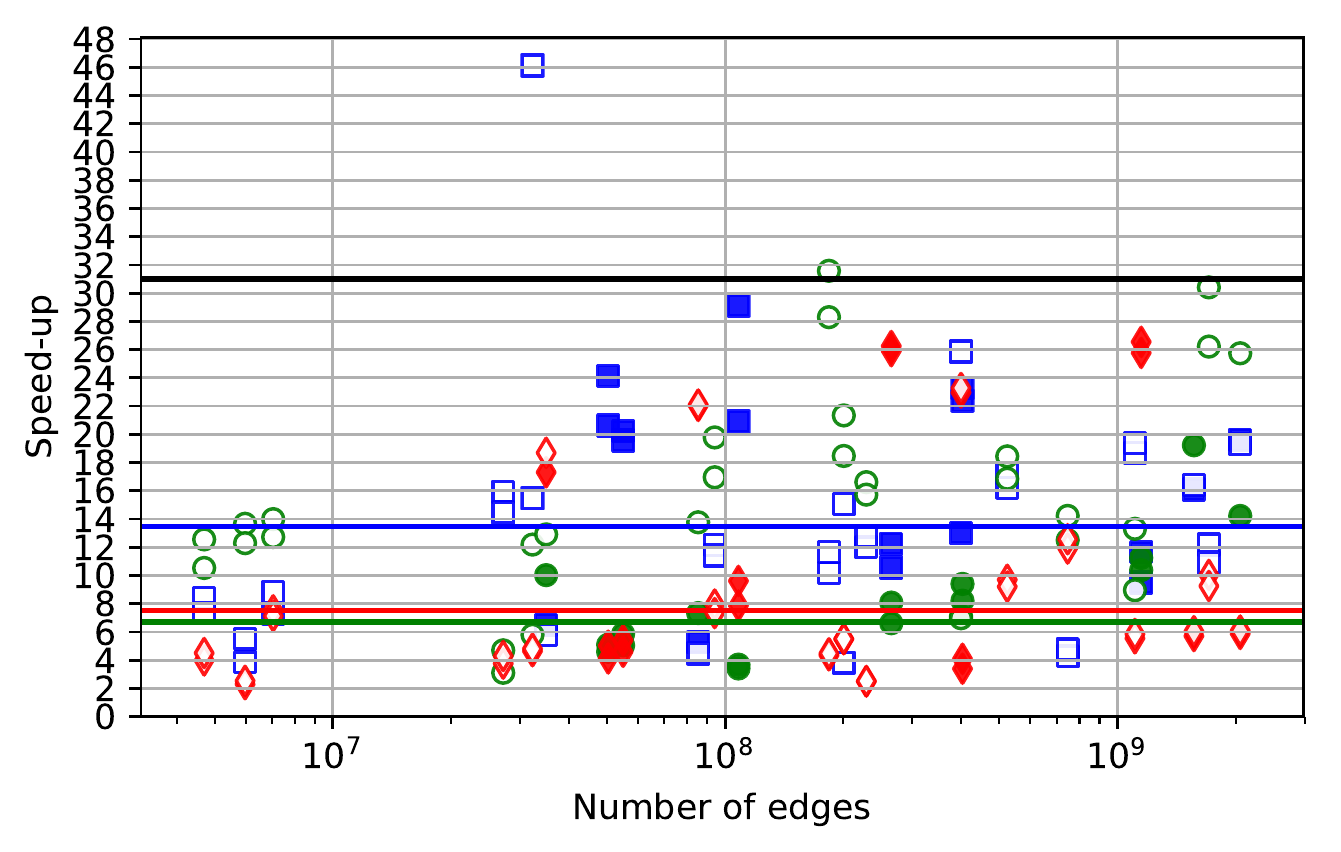}
	\label{fig:ls_sp_16}

	\includegraphics[width=.43\textwidth]{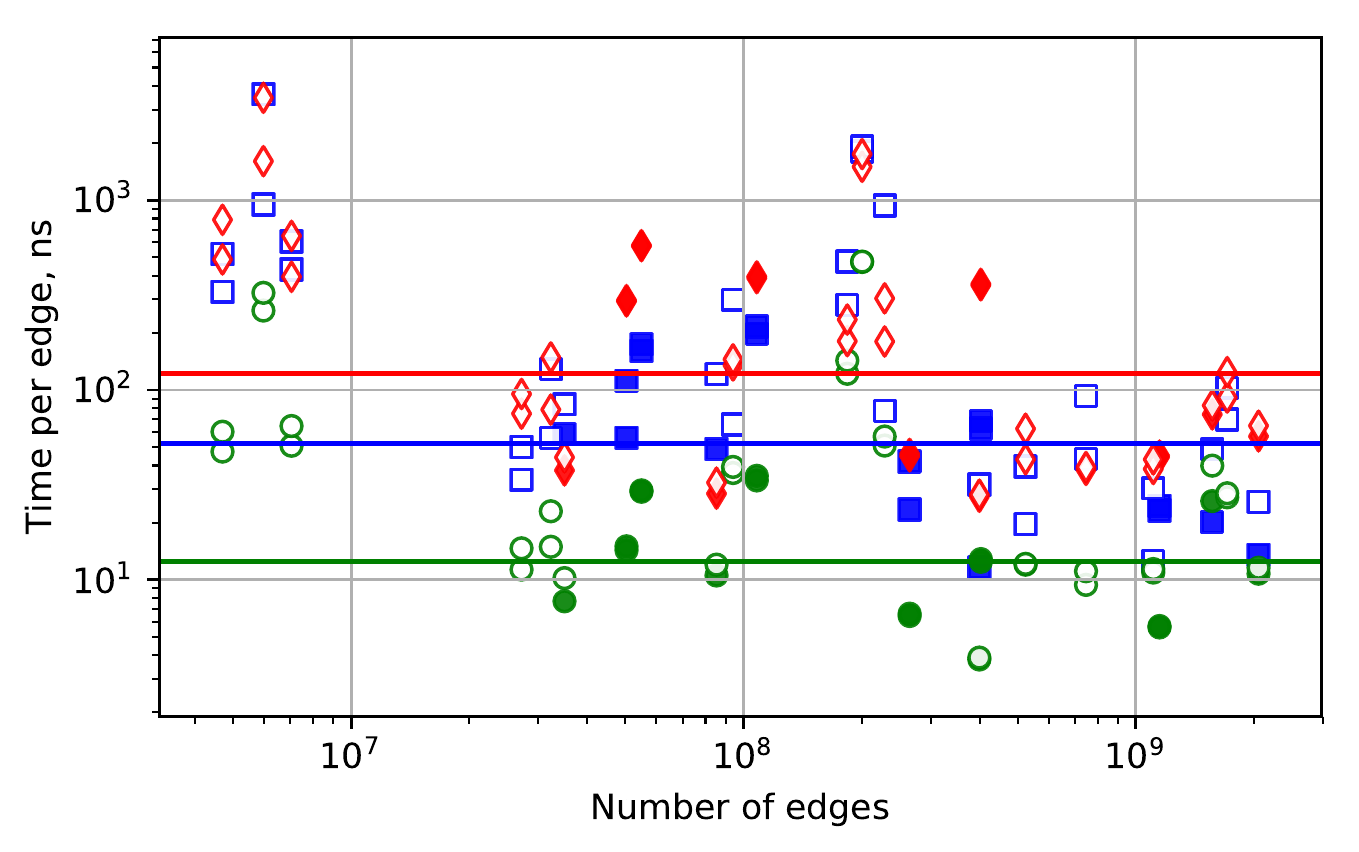}
	\label{fig:full_rt_16}

	\includegraphics[width=.43\textwidth]{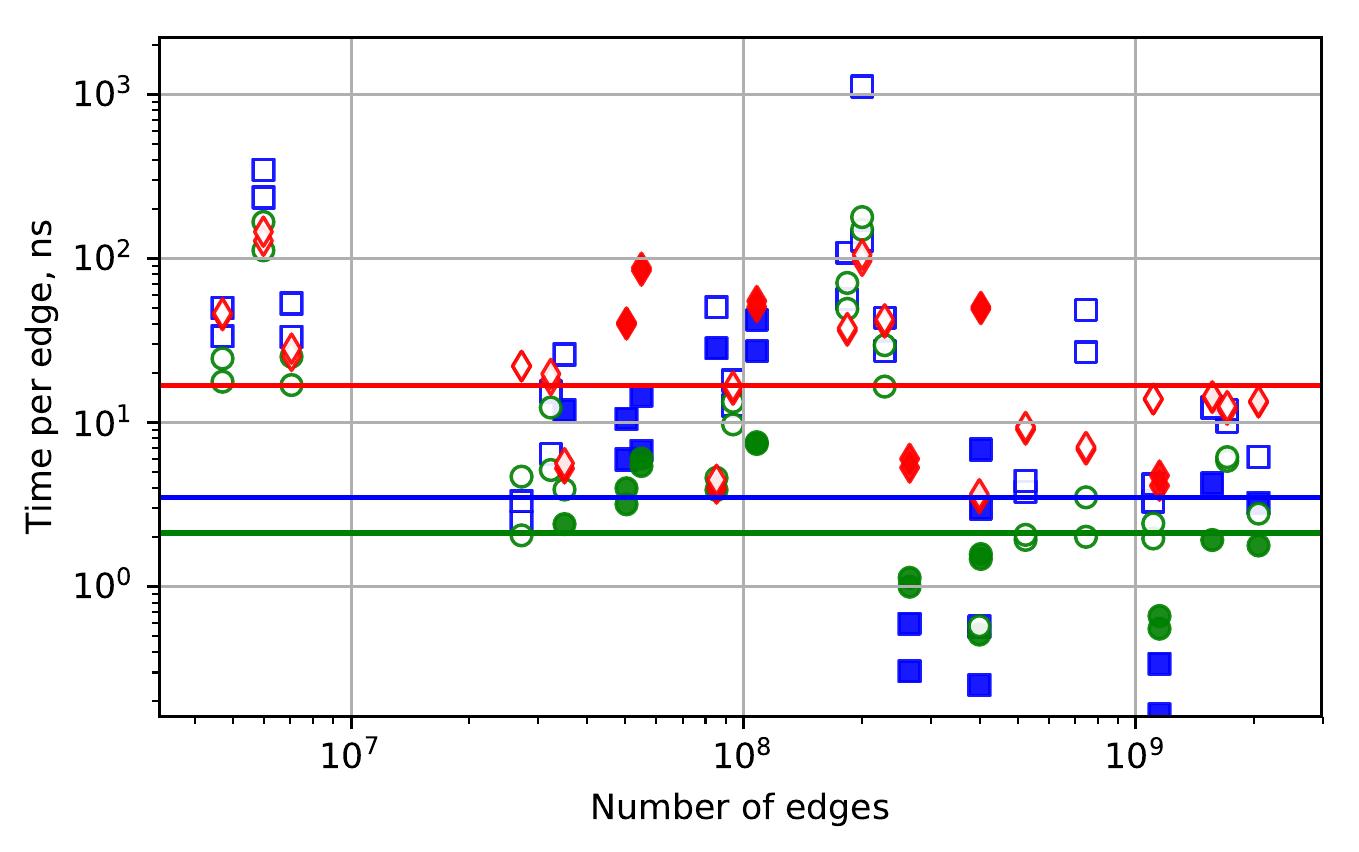}
        \caption{From top to bottom for $p=31$: legend, a) full speed-up,  b) local search speed-up, c) full running time per edge in nanoseconds, d) local search running time per edge in nanoseconds. Non-black horizontal lines are harmonic and geometric means.}
	\label{fig:ls_rt_16}
\end{figure}

    %

\revision{
	\subsection{Influence of Algorithmic Components}
	We now analyze how the parallelization of the different components affects solution quality of
	partitioning and present the speed-ups of each phase.
	We perform experiments on machine $B$ with configurations of our algorithm in which only one of the components (coarsening, initial partitioning, uncoarsening) is parallelized.
	The respective parallelized component of the algorithm uses $16$ processors and the other components run sequentially.
	Running the algorithm with parallel coarsening decreases the geometric mean of the cut by 0.7\%, with parallel initial partitioning decreases the cut by 2.3\% and with parallel local search decreases the cut by 0.02\%. 
	Compared to the full sequential algorithm, we conclude that running the algorithm with any parallel component either does not affect solution quality or improves the cut slightly on average.
	The parallelization of initial partitioning gives better cuts since it computes more initial partitions than the sequential~version.
	
	To show that the parallelization of each phase is important, we consider instances
	where one of the phases runs significantly longer than other phases.
	To do so, we perform experiments on machine $A$ using $p=31$. 
	For the graph rgg26 and $k = 16$, the \emph{coarsening phase} takes $91\%$ of the running time and
	its parallelization gives a speed-up of $13.6$ for $31$ threads and a full
	speed-up of $12.4$.
	For the graph webbase-2001 and $k = 16$, the \emph{initial partitioning phase} takes $40\%$
	of the running time and its parallelization gives a speed-up of $6.1$ and the overall
	speed-up is $7.4$.
	For the graph it-2004 and $k = 64$, the \emph{uncoarsening phase} takes $57\%$ of the runnung time and its parallelization gives a speed-up of $13.0$ and the overall
	speed-up is $9.1$.
	The harmonic mean speed-ups of the coarsening phase, the initial partitioning phase and
	the uncoarsening phase for $p = 31$ are $10.6$, $2.0$ and $8.6$,~respectively.
	
}

\revision{
\subsection{Memory consumption}
We now look at the memory consumption of our parallel algorithm on the three biggest graphs
of our benchmark set uk-2005, webbase-2001 and it-2004 for $k = 16$ (for $k = 64$ they are comparable).
The amount of memory needed by our algorithm for these graphs is $26.1$GB, $33.7$GB, and $34.5$GB for $p = 1$ on machine A, respectively.
For $p = 31$, our algorithm needs $30.5$GB, $45.3$GB, and $38.3$GB.
We observe only small memory overheads when increasing the number of processors.
We explain these by the fact that all data structures created by each processor are
either of small memory size (copy of a coarsened graph) or the data is distributed between them
approximately uniformly (a hash table in Multi-try $k$-way Local Search).
The amount of memory needed by \texttt{Mt-Metis} for these graphs is $46.8$GB, $53.3$GB, and $61.9$GB for $p = 1$, respectively.
For $p = 31$, \texttt{Mt-Metis} needs $59.4$GB, $67.7$GB, and $68.7$GB.
Summarizing, our algorithm consumes $48.7 \%$, $33.1 \%$, $44.3 \%$ less memory for these graphs
for $p = 31$.
Although, both algorithms have relatively little memory overhead for parallelization.}

\section{Conclusion and Future Work}
\label{sec:conclusion}

Graph partitioning is a key prerequisite for efficient large-scale parallel graph algorithms. 
We presented an approach to multi-level shared-memory parallel graph partitioning that guarantees balanced solutions, 
shows high speed-ups for a variety of large graphs and yields very good quality independently of the number of cores 
used.
Previous approaches have problems with recently grown structural complexity of networks that need partitioning -- 
they often show a negative trade-off between speed and quality.
Important ingredients of our algorithm include parallel label propagation for both coarsening and refinement,
parallel initial partitioning, a simple yet effective approach to parallel localized local search, and fast locality 
preserving hash tables.
Considering the good results of our algorithm, we want to further improve it and release its implementation.
More precisely, we are planning to further improve scalability of parallel coarsening and parallel MLS.
An interesting problem is how to apply moves in Section~\ref{parallel_mgp:refinement}
without the gain recalculation.
The solution of this problem will increase the performance of parallel MLS.
Additionally, we are planning to integrate a high quality parallel matching algorithm for the coarsening phase that
allows to receive better quality for mesh-like graphs. Further quality improvements should be possible by integrating a parallel version of the flow based techniques used in KaHIP.


\section*{Acknowledgment}
We thank Dominique LaSalle for helpful discussions about his hill-climbing local search technique.
The research leading to these results has received funding from the European Research Council under the European Union's Seventh Framework Programme (FP/2007-2013) / ERC Grant Agreement no. 340506.
\bibliographystyle{amsplain}
\bibliography{diss,bibliography,phdthesiscs}
\end{document}